\DeclareMathOperator{\argmax}{\arg\,\max}
\newtheorem{theorem}{Theorem}
\newtheorem{lemma}{Lemma}
\newtheorem{definition}{Definition}
\newtheorem{corollary}{Corollary}
\begin{document}
	
\title{Hotelling-Downs Model with Limited Attraction}
\author[1]{Weiran Shen}
\author[2]{Zihe Wang}
\affil[1]{IIIS, Tsinghua University}
\affil[2]{ITCS, Shanghai University of Finance and Economics}
%\author{
%	Weiran Shen \\
%	IIIS, Tsinghua University \\
%	\texttt{emersonswr@gmail.com}
%	\and
%	Zihe Wang \\
%	ITCS, Shanghai University of Finance and Economics\\
%	\texttt{wzh5858588@163.com}
%}
\date{}

\maketitle
\begin{abstract}
In this paper we study variations of the standard Hotelling-Downs model of spatial competition, where each agent attracts the clients in a restricted neighborhood, each client randomly picks one attractive agent for service. 

Two utility functions for agents are considered: support utility and winner utility.
We generalize the results by Feldman et al. to the case where the clients are distributed arbitrarily.
In the support utility setting, we show that a pure Nash equilibrium always exists by formulating the game as a potential game.
In the winner utility setting, we show that there exists a Nash equilibrium in two cases: when there are at most 3 agents and when the size of attraction area is at least half of the entire space.
We also consider the price of anarchy and the fairness of equilibria and give tight bounds on these criteria.
%Then we show the price of anarchy is at least 0.5 in the support utility maximization setting. We also consider other criteria: the support utility comparison among agents and how to reach a Nash equilibrium.
\end{abstract}

%\keywords{Hotelling-Downs Model, pure Nash equilibrium, price of anarchy}

\section{Introduction}

%Ever since the seminal works by Hotellings~\cite{hotelling1929stability} and Downs~\cite{downs1957economic}, the Hotelling-Downs model has become the standard model for many problems, ranging from determining the standpoint of a election candidate to choosing locations for commercial facilities. In the standard Hotelling-Downs model, two firms compete by choosing locations in an interval to set up shops. 
%The customers are distributed along the interval and always go to the closer shop. The firms' objective is to attract as many customers as possible. Since moving towards the competitor's location always attracts more customers, both firms choose the median point in the unique stable equilibrium, and thus attract equal amount of customer. This to some extent explains the phenomenon that the candidates' standpoints in a political election are often close to one another.
Ever since the seminal works by Hotellings~\cite{hotelling1929stability} and Downs~\cite{downs1957economic}, the Hotelling-Downs model has been applied to many problems, ranging from determining the standpoint of a election candidate to choosing locations for commercial facilities~\cite{osborne1995spatial,brenner2011location,wunder2012framework}. 
In the model, two firms choose shop locations in a line market.
Customers are distributed along the line. 
Assume the products of the firms are uniform, the customers always go to the closer shop. 
Hence, one firm can always attract more customers by moving towards the competitor's location.
Therefore, both firms choose the median point in the unique stable equilibrium, attracting an equal number of customers.
This also sheds light upon the phenomenon that the candidates' standpoints in a political election are often close.

%There are a large body of literatures following Hotelling-Downs model~\cite{osborne1995spatial,brenner2011location,wunder2012framework}. 
%In political election scenarios, to explain why the standpoints of the candidates differ in reality, \cite{wittman1990spatial,duggan2005electoral} assume that every candidate has an ideal location in mind and cares about how close the winner's location is to the ideal location.
%\cite{sengupta2008hotelling,brusco2012hotelling,xefteris2014mixed} study the model with runoff voting, where the voting takes place in multi rounds and only a subset of candidates from the previous round remains in the next round.
%In commercial facility location problems, more complex models are proposed to deal with other realistic issues. For example, when making decisions, customers consider the transportation cost caused by getting service.
%%It is also natural to introduce the transportation cost of the customer, firms can also set a price besides choosing a location.
%Such competition that involves both location and pricing are considered by Hotelling's original model~\cite{hotelling1929stability}, and some other existing works~\cite{dasgupta1986existence,thisse1988strategic,zhang1995price,pinkse2002spatial,fang2016market}.

% There are a large body of literatures following Hotelling-Downs model~\cite{osborne1995spatial,brenner2011location,wunder2012framework}. 
In political election scenarios,  the assumption is taken to model the standpoints of the candidates in \cite{wittman1990spatial,duggan2005electoral} that  every candidate has an ideal location in mind and cares about how close the winner's location is to the ideal location.
Models with runoff voting are studied in \cite{sengupta2008hotelling,brusco2012hotelling,xefteris2014mixed}, 
where the voting takes place in multiple rounds and only a subset of candidates from the previous round enter the next round.
In commercial facility location problems, more complex models are proposed to address other issues. For example, when making decisions, customers consider the transportation cost caused by getting service.
%It is also natural to introduce the transportation cost of the customer, firms can also set a price besides choosing a location.
Such competition that involves both location and pricing are considered by Hotelling's original model~\cite{hotelling1929stability}, and some other existing works~\cite{dasgupta1986existence,thisse1988strategic,zhang1995price,pinkse2002spatial,fang2016market}.

%In this paper, we focus on the pure location game~\cite{ben2016multi}.
%%Pure location game in the original model sacrifices no Nash equilibrium.
%There does not exist Nash equilibrium in the original pure location game.
%\cite{eaton1975principle} first shows there is no Nash equilibrium when there are 3 agents in the one-dimensional space.
%Then \cite{shaked1975non} extends the non-existence to two-dimensional space.
%After that \cite{osborne1993candidate} shows Nash equilibrium fails to exist in a wide range of settings when there are more than 2 agents.
%However, a mixed Nash equilibrium guarantees to exist~\cite{dasgupta1986existence,shaked1982existence}.
%This result is not obvious noticing that the utility functions in these games are not continuous with the action. Generally, a mixed Nash equilibrium does not guarantee to exist in such games.

In this paper, we focus on the pure location game~\cite{ben2016multi}.
It sacrifices non-existence of Nash equilibrium in the original pure location game.
Eaton~\emph{et~al.}~\cite{eaton1975principle} first show there is no Nash equilibrium when there are $3$ agents in the one-dimensional space.
 Shaked~\emph{et~al.}~\cite{shaked1975non} extend the non-existence to two-dimensional space. 
Thereafter Osborne~\emph{et~al.}~\cite{osborne1993candidate} show the Nash equilibrium does not exist in a wide range of settings when there are more than $2$ agents.
However, a mixed Nash equilibrium is guaranteed to exist~\cite{dasgupta1986existence,shaked1982existence}. 
This result is not obvious considering that the utility functions in these games are not continuous with the action. 
Generally, a mixed Nash equilibrium is not guaranteed to exist in such games.

%As pointed out in \cite{feldman2016variations}, some assumptions in the original model need more discussions.
%In the original model, the customer chooses the nearest shop no matter how far it is from himself.
%In reality, the customer would give up shopping if the shop is too far away.
%Furthermore, in the original model, the customer chooses the nearest shop no matter where the other shops are.
%In fact if two shops are close enough, they should have equal chance to be selected.
%For example, let us consider the case of order online restaurants. 
%Every restaurant delivers food to the customer within some distance.
%As long as the customer is within the restaurant's delivery service area, the customer does not care about the exact distance of the restaurant any more.
%If the food provided by these available restaurants has no big difference, 
%then the customer will choose one restaurant randomly form these restaurants.

The original Hotelling-Downs model suffers from some problematic assumptions: customers always choose the nearest shop without considering the distance, contradicting to the fact that a shop is no more attractive to a customer if it is too far away.
Furthermore, customers choose the shop without considering competing shops, while in daily life, it is hard to say which shop attracts more customers if two shops are close enough with similar products. These issues are also discussed in \cite{feldman2016variations}.

%To capture such customer's behavior and to analyze shops' locations caused by such behaviors, 
%\cite{feldman2016variations} proposes a variant modifying the above two assumptions, which we call it Hotelling-Downs model with limited attraction.
%In this model, every firm (called agent hereafter) has attraction only on the customers (called clients hereafter) in some limited interval.
%We call this limited interval ``attraction interval".
%Only the clients in the agent's attraction interval may choose that agent.
%If an client lies in the intersection of several agents, the client chooses each agent with equal probability.

To address the above issues, we consider the Hotelling-Downs model with limited attraction, proposed in \cite{feldman2016variations}. In this model, all firms (called agents hereafter) only attract customers (called clients hereafter) in a limited distance, and if a client is attracted by multiple agents, the client picks one from those agents with equal chance.

%We consider two basic settings: support utility maximization and winner utility maximization (i.e., winner takes all setting in \cite{feldman2016variations}).
%Here the support utility of an agent denotes the total support he receives.
%In the former one, the unique goal of every agent is to maximize his own support utility. This corresponds to commercial scenario.
%In the latter one, winners are the agents with the largest support utility.
%1 utility divided by the number of winners is the winner utility.
%This corresponds to the political scenario, i.e, winner takes all.
%The difference in the former setting is that, even if an agent cannot win, he still seeks to maximize the support utility.
\iffalse
We analyze the Nash equilibria with two utility functions for agents:
support utility and winner utility(i.e., winner takes all setting in \cite{feldman2016variations}). 
In the support utility setting, agents focus on maximizing the number of its clients, modeling the commercial competitions.
While in the winner utility setting, the winner in the competition takes all the utility, which is typical for the political voting.
\fi

We analyze the Nash equilibria with two utility functions for agents:
support utility and winner utility(i.e., winner takes all setting in \cite{feldman2016variations}). 
In the support utility setting, agents focus on maximizing the number of its clients, modeling the commercial competitions.
While in winner utility setting, the winner in the competition takes all the utility, which is typical for the political voting.

%We study the same problem as \cite{feldman2016variations} and extend their results on uniform distribution to general distribution.
%First of all, we consider the existence of {\it pure} Nash equilibrium. 
%It is easy to construct a Nash equilibrium when the distribution is uniform and agent's have the same size of attraction interval( called width hereafter),
%since we can make use of symmetric property in the uniform case.
%However, the method does not work anymore undel general distribution.
%In winner utility maximization setting, the main difficulty is that there is no tool to use in such discontinuous game.
%As we have noticed, it is hard to guarantee a Nash equilibrium in the related literature when there are more than 2 agents.
%It really needs some effort to construct one.
\iffalse
We extend the results on uniform distribution in \cite{feldman2016variations} to arbitrary distributions.
First of all, we consider the existence of {\em pure} Nash equilibrium. 
When the distribution is uniform and agents have the same attracting distance( called width hereafter), the existence of Nash equilibra can shown by simply constructing one.
since we can make use of symmetry in the uniform case.
However, the method does not work any more under other distribution.
In winner utility maximization setting, the main difficulty is that there is no tool to use in such discontinuous game.
As we have noticed, it is hard to guarantee a Nash equilibrium in the related literature when there are more than 2 agents.
It really needs some efforts to construct one.
\fi

We extend the results on uniform distribution in \cite{feldman2016variations} to arbitrary distributions.
First of all, we consider the existence of {\em pure} Nash equilibrium. 
In support utility setting, when the distribution is uniform and agents have the same attracting distance(called width hereafter), the existence of Nash equilibra can shown by simply constructing one.
However, the method does not work any more under other distribution. We solve this problem by formulating it as a potential game.
In winner utility maximization setting, to our knowledge, there is no tool to guarantee the existence of Nash equilibria in games with such utility functions. However, we show that a pure Nash equilibrium does exist in some simple cases.

\iffalse
Secondly, we characterize the Nash equilibrium.
The difficulty is that we do not know what a Nash equilibrium exactly is and whether Nash equilibrium is unique or not.
We compare different agents' support utilities in a Nash equilibrium and check whether support utilities are approximately equal.
The utility decreases because of the agent's self-interest behavior, the price of anarchy is quantified to study the inefficiency.
Since the sum of agents' support utilities equal to the number of clients served, % that are provided services, 
the sum of agents' support utilities is exactly the participation rate.
\fi
Secondly, we study fairness and price of anarchy of Nash equilibria in support utility setting (both of which are straightforward in the winner utility setting). We give tight bounds on both criteria.

\subsection{Results and contributions}
\begin{itemize}
\item
In the support utility maximization setting, support utility is continuous. 
Applying Glicksberg's theorem~\cite{glicksberg1952further}, continuous game guarantees a mixed Nash equilibrium.
If we let the agents dynamically best respond to the other's locations (one agent each round), then the location profile converges to a Nash equilibrium.
Scrutinizing each agent's action, each improvement actually increases a potential function.
Thus the game admits a pure Nash equilibrium.

\item
In the winner utility maximization setting, the winner utility is not continuous any more. 
The game is somehow harder to analyze.
We restrict the problem to the case when the agents have the same width.
We prove that when there are at most 3 agents, there exists a pure Nash equilibrium. 
Three-agents case is very special in other variants~\cite{eaton1975principle,shaked1975non,shaked1982existence}.
The existence of Nash equilibrium complements their results.
If the agent's width is at least half of the client space, then there also exists a Nash equilibrium for any number of agents.

%\item In winner utility maximization setting, the loser could have a very low support utility compared to the winner, 
%due to the fact that the loser has no incentive to improve it.
%While in support-utility maximization setting, the support utility of each agent is bounded by that of any other agent.
%To be specific, we define the utility density to be  the support utility divided the width.
%The ratio between any two agent's utility density lies in $[0.5,2]$.

\item We study fairness and price of anarchy of Nash equilibria in support utility setting (both of which are straightforward in the winner utility setting). We show that the fairness criterion can be bounded by $\frac{1}{2\left\lceil w_M/w_m\right\rceil }$, where $w_M=\max\{w\}$ and $w_m=\min\{w\}$. We also prove that the price of anarchy at least $\frac{1}{2}$. Both bounds are tight.
%Considering price of anarchy, it could be very low in the winner utility maximization setting.
%But we have positive result in the support utility maximization setting, price of anarchy is at least 0.5.
%
%
%
%\item We discuss how to construct a Nash equilibrium efficiently. There is a case that might need arbitrarily many steps to reach the equilibrium.
\end{itemize}

The structure of this paper is as follows:
In Section~\ref{sect:models} , we describe the coined Hotelling-Downs model with limited attraction in \cite{feldman2016variations}.
In Section~\ref{sect:max}, we prove the existence of the Nash equilibrium in support-utility maximization setting.
In Section~\ref{sect:winner}, we construct a Nash equilibrium in winner utility maximization setting.
In Section~\ref{sect:fair}, the support utilities in Nash equilibrium are compared.
In Section~\ref{sect:poa}, the price of anarchy is given and we give an upper bound on the amount of  clients have not been served.

	\section{Hotelling-Downs model with limited attraction}
	\label{sect:models}
	
	We consider a one-dimensional location space, represented by interval $[0,1]$.
	A continuum of clients are distributed in the interval according to some density function $f(x)$. 
	Let $N=(1,2,\dots,n)$ denote a finite set of agents and each agent $i$ is associated with an attraction width $w_i$.
	Each agent chooses a location in $[0,1]$ and an attraction interval $R_i$ centered at the chosen location is formed.
	The agent thus obtain the support from the clients in his attraction interval. If a client is covered by multiple agents, the client simply randomly choose one, i.e. the support of this client is equally divided among these agents in expectation.
	Assume that agent $i$ chooses location $x_i$ then the attraction interval $R_i$ would be $[x_i-\frac{w_i}{2},x_i+\frac{w_i}{2}]$. We assume that $f(x)=0$ outside the interval $[0,1]$ and thus each agent will only choose a location from $\left[\frac{w_i}{2},1-\frac{w_i}{2} \right]$
		
	Let $\vec{x}$ denote the joint location profile $(x_1,x_2,...,x_n)$,
	and $\vec{x}_{-i}$  denote the profile without $i$, i.e. $(x_1,...,x_{i-1},x_{i+1}...,x_n)$.
	Given $\vec{x}$, let congestion function $c(x, \vec{x})$ be the the number of attraction intervals covering point $x$, 
	$$c(x,\vec{x})=\#\{x_i | x \in R_i\}$$
	
	Clearly, the following equation holds:
	\begin{gather*}
	c(x,\vec{x})=
	\begin{cases}
	c(x,\vec{x}_{-i})  & x\notin\left[x_i-\frac{w_i}{2},x_i+\frac{w_i}{2}\right] \\
	c(x,\vec{x}_{-i})+1 & x\in\left[x_i-\frac{w_i}{2},x_i+\frac{w_i}{2}\right]
	\end{cases}
	\end{gather*}
	
	For simplicity, we use $c(x)$ instead of $c(x,\vec{x})$ when there is no ambiguity.
	
	If a point $x$ is covered by multiple attraction intervals (i.e., $c(x)\geq 2$), then the support of that point is evenly divided among all these agents.
	Agent $i$'s support $s_i$ is then defined to be the total support from his attraction interval:
	\begin{gather*}
	s_i(\vec{x})=\int_{x_i-w_i/2}^{x_i+w_i/2}\frac{f(x)}{c(x)}\,\mathrm{d}x
	\end{gather*}
	
	In our model, we assume that the distribution function $f(x)$ and the width $w_i, \forall i$ are publicly known. We consider two kinds of utility settings: support utility and winner utility. The support utility setting uses the support function as agents' utility function. In the winner utility setting, only agents with the largest support are considered to be the winners, and share a total utility of 1 equally among them, while other agents have utility 0. Note that in the winner utility setting, each agent only cares about whether he is a winner and the number of winners, since if the agent is a winner, he has a higher utility when there are less winners.
	
	Formally, an attraction game is defined as follows:
	\begin{definition}
		Given the clients' distribution $f(x)$, an attraction game is a tuple $G=(N,w,L,u)$, where:
		\begin{itemize}
			\item $N=(1,2,\dots,n)$ is the set of all agents;
			\item $w=(w_1,w_2,\dots,w_n)$ is the widths associated to agents;
			\item $L=L_1\times L_2\times\dots\times L_n$ is the set of all possible location profiles, where $L_i=\left[\frac{w_i}{2},1-\frac{w_i}{2} \right]$.
			\item $u=(u_1,u_2,\dots,u_n)$ is the utility functions for the agents, the definition of which depends on the setting we consider:
			\begin{itemize}
				\item in the support utility setting, $u_i(\vec{x})=s_i(\vec{x})$;
				\item in the winner utility setting, 
				\begin{gather*}
				u_i(\vec{x})=
				\begin{cases}
				\frac{1}{|W|} & i\in W\\
				0 & \mathrm{otherwise}
				\end{cases}
				\end{gather*}
				where $W$ denotes the winner set.
			\end{itemize}
		\end{itemize}
	\end{definition}
	
	A Nash equilibrium of game $G$ is a stable location profile, where no agent can deviate to another location to get a higher utility.
	\begin{definition}[Nash equilibrium]
		Given a game $G$, the set of Nash equilibra $NE(G)$ contains all location profiles $\vec{x}$, such that $\forall i\in N$ and $\forall x'_i\in L_i$,
		\begin{gather*}
		u_i(x_i, \vec{x}_{-i})\geq u_i(x'_i, \vec{x}_{-i})
		\end{gather*}
	\end{definition}
	
	In different utility settings, the definitions of the utility functions are different, and thus have distinct Nash equilibria. Consider the following example:
	
	{\bf Example.} Assume that the clients are distributed uniformly.
	There are 3 agents with widths $w_1=0.4$, $w_2=0.3$, $w_1=0.4$. The location profile is $\vec{x}=(0.2, 0.65, 0.8)$ (see Figure~\ref{fig:example1}).

%	For congestion function, we have the following equation

%		\[
%c(x,\vec{x})=\left\{\begin{array}{ll}
%c(x,\vec{x}_{-i})  & x\notin[x_i-\frac{w_i}{2},x_i+\frac{w_i}{2}] \\
%c(x,\vec{x}_{-i})+1 & x\in[x_i-\frac{w_i}{2},x_i+\frac{w_i}{2}]\\
%\end{array}\right.
%\]

%		Let $R_i$ be the interval chosen by agent $i$ and $R=(R_1, R_2, \dots, R_n)$ be the action profile of all agents. Let $m(x)=\{i|x\in R_i\}$ and $c(x)=|m(x)|$. The utility function of agent $i$ is:
%	\begin{gather*}
%	u_i(R)=\int_{R_i}\frac{f(x)}{c(x)}\,\mathrm{d}x
%	\end{gather*}
	%Without loss of generality, we assume that the clients are distributed in some limited region $\Omega$, with probability density function $f(x)$. Let $N=(1,2,\dots,n)$ denote the set of agents. Each agent chooses a subset of $\Omega$ and obtain the probability measure of the chosen region as its utility. If a point is chosen by multiple agents, then the density of that point is evenly divided among all these agents.
%	%
%	\begin{itemize}
%	\item $x_i$ is the location.
%	\item $w_i$ is the width
%	\item $R_i$ is the interval
%	\item $u(R_i)$ is utility.
%	\item $c(x)$ is the number of available agent
%	\item $d$ is not necessary.
%	\end{itemize}
	\begin{figure}[h] %  figure placement: here, top, bottom, or page
	   \centering
	   \includegraphics[width=3.4in]{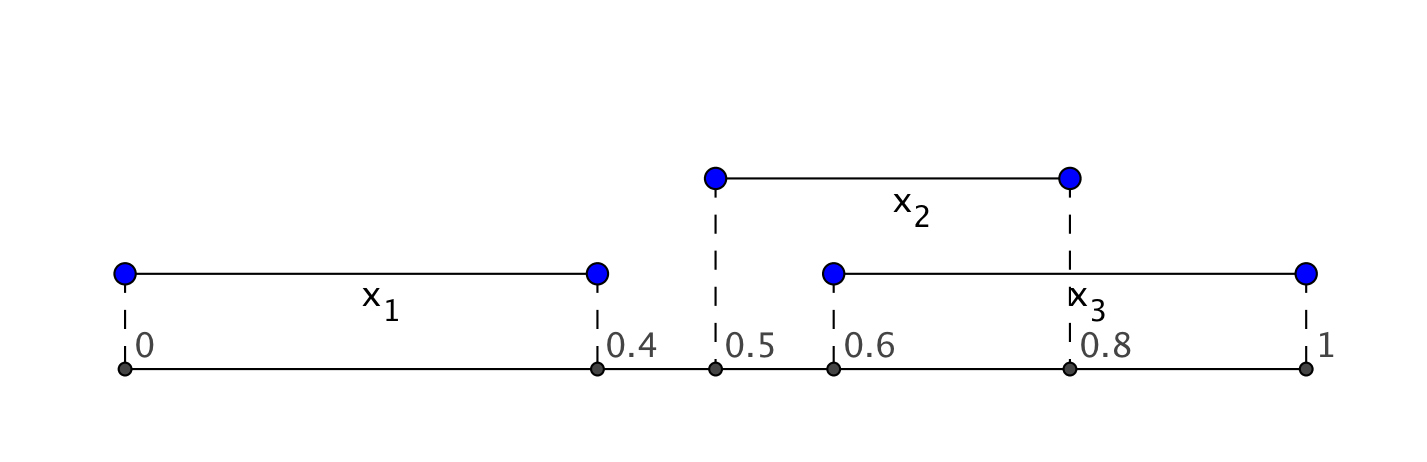} 
	   \caption{example}
	   \label{fig:example1}
	\end{figure}
 	The corresponding congestion function is
\begin{displaymath}
c(x)=\left\{\begin{array}{ll}
1  & x\in[0,0.4] \\
0 & x\in(0.4,0.5]\\
1 & x\in(0.5,0.6]\\
2 & x\in(0.6,0.8]\\
1 & x\in(0.8,1]\\
\end{array}\right.
\end{displaymath} 
	The support of the 3 agents are
	\begin{itemize}
	\item $s_1=\int^{0.4}_0 f(x)\,\mathrm{d}x=0.4$
	\item $s_2=\int^{0.6}_{0.5} f(x)\,\mathrm{d}x+\int^{0.8}_{0.6} \frac{f(x)}{2}\,\mathrm{d}x=0.2$
	\item $s_3=\int^{0.8}_{0.6} \frac{f(x)}{2}\,\mathrm{d}x+\int^{1}_{0.8} f(x)\,\mathrm{d}x=0.3$
	\end{itemize}
	
	In the support utility setting, the profile $\vec{x}$ does not form a Nash equilibrium, since given $\vec{x}_{-2}$, agent 2 has incentive to deviate to $0.55$. And by doing so, $u_2$ increases as $R_2$ has less intersection with $R_3$. However, in the winner utility setting, the profile $\vec{x}$ forms a Nash equilibrium and agent 1 is the unique winner.
	
%	\begin{itemize}
%	\item {\it support utility maximizer:}
%		Every agent wants to maximize his own support utility.
%		In Fig.~\ref{fig:example1}, given $\vec{x}_{-1}$, agent 1 has maximized the support utility.
%		While  from $0.65$.
%		By doing so, $R_2$ has less intersection with $R_3$, thus  .
%		
%	\item {\it winner utility maximization:}
%		The agents with the largest support utility are the winners.
%		The winners shares the winner utility 1 equally.
%		
%		In Fig.~\ref{fig:example1}, agent 1 is the unique winner with the largest support utility $0.4$.
%		Agent 1 has winner utility 1, and agent 2 and 3 has winner utility 0.
%		Both agent 2 and agent 3 cannot deviate alone to become a winner.
%		For example, if agent 3 deviates to $0.2$ same as agent 1, then agent 2 will become the winner.
%		
%	\end{itemize}
%	
%
%	If we consider the support utility setting, the example in Fig.~\ref{fig:example1} is not a Nash equilibrium.
%	If we consider the winner utility maximization setting, the example in Fig.~\ref{fig:example1} is a Nash equilibrium.	

	\section{Existence of Nash equilibrium in support utility setting}
	\label{sect:max}
	A well known theorem of Glicksberg\cite{glicksberg1952further} states that every continuous game has a mixed Nash equilibrium.
	By definition,
	$$u_i(x_i,\vec{x}_{-i})=\int_{x_i-w_i/2}^{x_i+w_i/2} \frac{f(x)}{c(x,\vec{x}_{-i})+1}\,\mathrm{d}x$$ 
	Agent $i$'s support utility is continuous with $x_i$.
	According to Glicksberg's theorem, there exists a mixed Nash equilibrium in this setting.
	
	However, due to the special structure of our model, we could further show that a pure Nash equilibrium always exists.
	The game can be viewed as a congestion game where the resources are the densities associated to each point. It is known that every finite congestion game has a pure strategy Nash equilibrium. Although agents's action space is infinitely in the game, we can still show that a pure strategy Nash equilibrium exists.
	\begin{theorem}
	There exists a pure Nash equilibrium in the support utility setting.
	\end{theorem}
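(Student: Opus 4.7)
The plan is to construct an exact potential function $\Phi$ so that every agent's change in utility under a unilateral deviation equals the corresponding change in $\Phi$. Taking the continuous analog of Rosenthal's potential for congestion games, I would define
\[
\Phi(\vec{x}) = \int_0^1 f(x) \sum_{k=1}^{c(x,\vec{x})} \frac{1}{k} \, \mathrm{d}x,
\]
with the convention that the inner sum is $0$ when $c(x,\vec{x})=0$. Intuitively, each point $x$ is a ``resource'' of density $f(x)$, and adding the $k$-th agent covering $x$ contributes $f(x)/k$ to both $\Phi$ and to that agent's support.

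Next I would verify that $\Phi$ is an exact potential. Fix an agent $i$ and a deviation from $x_i$ to $x_i'$ with corresponding intervals $R_i$ and $R_i'$. On $x\in R_i'\setminus R_i$ the congestion jumps from $c(x,\vec{x}_{-i})$ to $c(x,\vec{x}_{-i})+1$, so the integrand of $\Phi$ gains exactly $f(x)/(c(x,\vec{x}_{-i})+1)$, which is also precisely the integrand of the new contribution to $u_i$; on $x\in R_i\setminus R_i'$ both quantities lose the same amount; on $R_i\cap R_i'$ and outside $R_i\cup R_i'$ neither changes. Integrating gives
\[
\Phi(x_i',\vec{x}_{-i})-\Phi(x_i,\vec{x}_{-i}) = u_i(x_i',\vec{x}_{-i})-u_i(x_i,\vec{x}_{-i}).
\]

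To conclude I would show that $\Phi$ attains a maximum on $L$. Since $L=\prod_i\left[\tfrac{w_i}{2},1-\tfrac{w_i}{2}\right]$ is compact, it suffices to check that $\Phi$ is continuous in $\vec{x}$. Shifting a single coordinate $x_i$ by $\epsilon$ alters the integrand only on the symmetric difference between the old and the new $R_i$, a set of Lebesgue measure at most $2|\epsilon|$; since the integrand is bounded in absolute value by $f(x)$ and the Lebesgue integral of $f$ is absolutely continuous, $\Phi$ is continuous in each coordinate, and the same bound handles joint continuity. Let $\vec{x}^{*}$ be any maximizer. If some agent $i$ had a profitable deviation $x_i'$, the potential identity would give $\Phi(x_i',\vec{x}^{*}_{-i}) > \Phi(\vec{x}^{*})$, contradicting maximality. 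Hence $\vec{x}^{*}$ is a pure Nash equilibrium.

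The main obstacle is not the Rosenthal-style identity, which is essentially bookkeeping, but the continuous strategy space: the classical finite-congestion-game argument (every better-response sequence terminates) does not apply, so one must invoke compactness of $L$ together with continuity of $\Phi$ to extract a maximizer. The continuity step in turn relies only on the mild observation that moving an interval by $\epsilon$ perturbs the integral by an amount controlled by the absolute continuity of $\int f$, which requires no regularity on $f$ beyond integrability.
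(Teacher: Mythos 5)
Your proposal is correct and takes essentially the same route as the paper: the identical Rosenthal-type potential $\Phi(\vec{x})=\int_0^1 f(x)\sum_{k=1}^{c(x,\vec{x})}\frac{1}{k}\,\mathrm{d}x$, verified to be a potential (the paper does this by adding $\int_0^1\sum_{k=1}^{c(x,\vec{x}_{-i})}\frac{f(x)}{k}\,\mathrm{d}x$ to both sides of the better-response inequality, you by a pointwise decomposition over $R_i'\setminus R_i$, $R_i\setminus R_i'$, and their complement, which is the same bookkeeping), followed by maximizing $\Phi$ over the compact strategy space. Your justification of the continuity of $\Phi$ via the measure of the symmetric difference and absolute continuity of $\int f$ is actually more explicit than the paper's, which merely asserts continuity.
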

	\begin{proof}
	Given other agents' locations $\vec{x}_{-i}$, agent i's support utility can be written as
	$$u_i(x_i)=\int^{x_i+\frac{w_i}{2}}_{x_i-\frac{w_i}{2}} \frac{f(x)}{c(x,\vec{x}_{-i})+1} \,\mathrm{d}x$$
	If agent $i$ prefers $x_i'$ to $x_i$, we have
	$$\int^{x_i'+\frac{w_i}{2}}_{x_i'-\frac{w_i}{2}} \,\frac{f(x)}{c(x,\vec{x}_{-i})+1} \,\mathrm{d}x>\int^{x_i+\frac{w_i}{2}}_{x_i-\frac{w_i}{2}} \frac{f(x)}{c(x,\vec{x}_{-i})+1} \,\mathrm{d}x$$
	On both sides of the inequality, we add the following term
	$$\int^{1}_{0} \sum_{k=1}^{c(x,\vec{x}_{-i})} \frac{f(x)}{k} \,\mathrm{d}x$$
	The left side of inequality becomes
	\begin{align*}
	&\int^{1}_{0}\, \sum_{k=1}^{c(x,\vec{x}_{-i})} \,\mathrm{d}x + \int^{x_i'+\frac{w_i}{2}}_{x_i'-\frac{w_i}{2}} \frac{f(x)}{c(x,\vec{x}_{-i})+1} \,\mathrm{d}x\\
	=&\int^{x_i-\frac{w_i}{2}}_{0} \,\sum_{k=1}^{c(x,\vec{x}_{-i})} \frac{f(x)}{k} \,\mathrm{d}x + \int^{1}_{x_i+\frac{w_i}{2}} \sum_{k=1}^{c(x,\vec{x}_{-i})} \frac{f(x)}{k}  \,\mathrm{d}x\\
	&+\int^{x_i'+\frac{w_i}{2}}_{x_i'-\frac{w_i}{2}} \sum_{k=1}^{c(x,(x_i',\vec{x}_{-i}))} \frac{f(x)}{k} \,\mathrm{d}x\\
	=&\int^{1}_{0} \,\sum_{k=1}^{c(x,(x_i',\vec{x}_{-i}))} \frac{f(x)}{k} \,\mathrm{d}x.
	\end{align*}
	Similarly, we can get the right side and the inequality becomes
	\begin{eqnarray}
	\int^{1}_{0} \sum_{k=1}^{c(x,(x_i',\vec{x}_{-i}))} \frac{f(x)}{k} \,\mathrm{d}x>	
	\int^{1}_{0} \sum_{k=1}^{c(x,\vec{x})} \frac{f(x)}{k} \mathrm{d}x
	\label{eqa:potential}
	\end{eqnarray}
	If we start from an arbitrary location profile, many agents' strategies are not optimal.
	Then in each round, we let one agent to better respond to other agents' strategies.
	By equation~(\ref{eqa:potential}), each time one agent improves his support utility, he actually improves a potential function,
	$$\Phi(\vec{x}) =\int^{1}_{0} \sum_{k=1}^{c(x,\vec{x})} \frac{f(x)}{k} \mathrm{d}x$$
%	
%	For any two actions $R_i$ and $R'_i$, define $R^{-}=R_i\setminus R'_i$ and $R^{+}=R'_i\setminus R_i$. We have
%	\begin{align*}
%	u_i(R'_i,R_{-i})-u_i(R_i,R_{-i})&=\int_{R^{+}}\frac{f(x)}{c(x)+1}\,\mathrm{d}x-\int_{R^{-}}\frac{f(x)}{c(x)}\,\mathrm{d}x \\
%	&=\Phi(R'_i,R_{-i})-\Phi(R_i,R_{-i})
%	\end{align*} 
%	Thus $\Phi(R)$ is indeed a potential function.
%	

	The potential function can be upper bounded,
%		\begin{eqnarray*}
%	\Phi(\vec{x}) &=& \int^{1}_{0} \sum_{k=1}^{c(x,\vec{x})} \frac{f(x)}{k} \mathrm{d}x\\
%	&\leq&\int^{1}_{0} \sum_{k=1}^{n} \frac{f(x)}{k} \mathrm{d}x\\
%	&=&\sum_{k=1}^{n}\frac{1}{k}
%	\end{eqnarray*}
$$\Phi(\vec{x}) = \int^{1}_{0} \sum_{k=1}^{c(x,\vec{x})} \frac{f(x)}{k} \mathrm{d}x \leq\int^{1}_{0} \sum_{k=1}^{n} \frac{f(x)}{k} \mathrm{d}x=\sum_{k=1}^{n}\frac{1}{k}.$$
	Combined the fact $\Phi(\vec{x})$ is continuous with $\vec{x}$, we have that $\Phi(\vec{x})$ has a maximum value.
	In the location profile $\vec{x}^*=\argmax_{\vec{x}} {\Phi(\vec{x}) }$, 
	no agent can improve his support utility	 and thus it is a Nash equilibrium.
		\end{proof}
%	\begin{align*}
%	\sum_{k=1}^{c(x)}d_x(k)=\sum_{k=1}^{c(x)}\frac{f(x)}{k}\le f(x)\sum_{k=1}^{n}\frac{1}{k}
%	\end{align*}
%	Then
%	\begin{align*}
%	\Phi(R)=\int_{\Omega}\sum_{k=1}^{c(x)}d_x(k)\,\mathrm{d}x\le \int_{\Omega}f(x)\left(\sum_{k=1}^{n}\frac{1}{k}\right)\,\mathrm{d}x=\sum_{k=1}^{n}\frac{1}{k}
%	\end{align*}
%	is limited. It follows that $\Phi(R)$ has a maximum value (the conditions need to be refined). Therefore
%	\begin{align*}
%	R^*=\argmax \Phi(R)
%	\end{align*}

	Noticing that the above proof does not use the fact of one-dimension space, and the result can be extended to multi-dimensional space.
	\begin{corollary}
	There is a pure Nash equilibrium in the support utility maximization setting if the location space is multi-dimensional.
	\end{corollary}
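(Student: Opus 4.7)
The plan is to repeat the potential-function argument used in the one-dimensional theorem, with only the cosmetic change that the attraction region $R_i$ is now a suitable subset of $\mathbb{R}^d$ (for instance, a ball or a box of ``width'' $w_i$) centered at the chosen location $x_i\in L_i\subseteq\mathbb{R}^d$, and that $f$ is a density on $L_i$'s ambient space with $\int f = 1$. The congestion function $c(x,\vec{x})=\#\{i : x\in R_i\}$ and the support $s_i(\vec{x})=\int_{R_i}\frac{f(x)}{c(x,\vec{x})}\,\mathrm{d}x$ are defined exactly as before, with the one-dimensional integrals replaced by $d$-dimensional ones over the location space.

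First I would define the same potential function
\[
\Phi(\vec{x})=\int \sum_{k=1}^{c(x,\vec{x})}\frac{f(x)}{k}\,\mathrm{d}x,
\]
where the integral is now taken over the multi-dimensional location space. Next I would repeat the key algebraic step: given a unilateral deviation from $x_i$ to $x_i'$, split the integral $\Phi(x_i',\vec{x}_{-i})-\Phi(x_i,\vec{x}_{-i})$ into the three regions ``outside $R_i(x_i)\cup R_i(x_i')$'', ``inside $R_i(x_i)$'', and ``inside $R_i(x_i')$'', and observe that each point contributes exactly the marginal term $f(x)/(c(x,\vec{x}_{-i})+1)$ when it enters $R_i(x_i')$ (and the negative thereof when it leaves $R_i(x_i)$). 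This shows $\Phi(x_i',\vec{x}_{-i})-\Phi(x_i,\vec{x}_{-i})=u_i(x_i',\vec{x}_{-i})-u_i(x_i,\vec{x}_{-i})$, which is literally the one-dimensional computation with $\mathrm{d}x$ reinterpreted as $d$-dimensional Lebesgue measure.

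Then I would bound $\Phi(\vec{x})\leq \sum_{k=1}^{n}\frac{1}{k}\int f(x)\,\mathrm{d}x=\sum_{k=1}^{n}\frac{1}{k}$ using $c(x,\vec{x})\leq n$, and invoke continuity of $\Phi$ in $\vec{x}$ together with compactness of the joint location space $L=L_1\times\cdots\times L_n$ to conclude that $\Phi$ attains a maximum at some $\vec{x}^*$. At such a maximizer no agent can strictly increase $\Phi$ by a unilateral deviation, hence no agent can strictly increase his own utility either, so $\vec{x}^*$ is a pure Nash equilibrium.

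The only place where something could go wrong is the continuity/compactness step: one must check that $R_i$ depends continuously (in the sense of symmetric-difference measure) on $x_i$, so that $\Phi$ is continuous in $\vec{x}$. For the natural choices of $R_i$ (balls, boxes, or any compact region rigidly translating with $x_i$) combined with an integrable $f$, this follows from dominated convergence. I do not expect any genuine obstacle: the one-dimensional proof was written in a dimension-free manner, and the potential-function identity is purely a set-theoretic bookkeeping over $R_i(x_i)$ versus $R_i(x_i')$.
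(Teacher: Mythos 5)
Your proposal is correct and follows exactly the route the paper intends: the paper justifies the corollary by observing that the one-dimensional potential-function proof never uses one-dimensionality, which is precisely the argument you spell out (with the welcome extra care about continuity of $\Phi$ via the symmetric-difference measure of $R_i$). No further comment is needed.
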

	
%	\subsection{How to compute the Nash equilibrium}
%It could take infinite number of steps to converge to the Nash equilibrium.
%??? new theorem
	
	\section{existence of nash equilibrium in winner utility setting}
	\label{sect:winner}
	In winner utility setting, the utility of agent $i$ is no longer continuous with respect to the agent's location.
	The potential function in the support-utility-maximizing setting does not work.
	
	\begin{definition} A game is an ordinal potential game,
 if there is a function 
$\phi :A \rightarrow R$
such that  $\forall a_{-i}\in A_{-i}$, $\forall {a'_{i}, a''_{i}\in A_{i}}$, 
$$(a'_{{i}},a_{{-i}})-u_{{i}}(a''_{{i}},a_{{-i}})>0\Leftrightarrow \Phi (a'_{{i}},a_{{-i}})-\Phi (a''_{{i}},a_{{-i}})>0$$
\end{definition}

It seems difficult to design a potential function to prove NE existence, the reason is the following theorem.
\begin{theorem}
Winner utility maximization game is not an ordinal potential game.
\end{theorem}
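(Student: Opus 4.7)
The standard route to showing a game is \emph{not} an ordinal potential game is the Monderer--Shapley cycle criterion: a game admits an ordinal potential if and only if it has no improvement cycle (i.e., no finite sequence of profiles $\vec{x}^{(0)},\vec{x}^{(1)},\dots,\vec{x}^{(K)}$ with $\vec{x}^{(K)}=\vec{x}^{(0)}$ in which each consecutive pair differs in a single coordinate $i_k$ and $u_{i_k}(\vec{x}^{(k)})>u_{i_k}(\vec{x}^{(k-1)})$). The plan is therefore to exhibit one explicit improvement cycle; if such a cycle exists, then along it an ordinal potential $\Phi$ would have to satisfy $\Phi(\vec{x}^{(k)})>\Phi(\vec{x}^{(k-1)})$ at every step, contradicting $\Phi(\vec{x}^{(K)})=\Phi(\vec{x}^{(0)})$.

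First I would set up a simple instance, most naturally with $f$ uniform on $[0,1]$ and a small number of agents of equal width $w$, where there is structural room for agents to ``jump'' between distinct clusters. The mechanism to exploit is exactly what makes the winner utility discontinuous: a small relocation can turn a non-winner into a winner or change the size of the winning set, so the utility jumps between the discrete values $0,\frac{1}{n},\frac{1}{n-1},\dots,1$. In particular I would start from a profile where two or more agents are co-located and tied as winners, have a non-winner jump to join the tie (utility goes from $0$ to $\frac{1}{|W|+1}$), then have another agent break out to a disjoint empty region so as to become a winner (or tied winner) with full support~$w$, and so on, engineering the moves so that after a few steps the profile returns to the starting one up to a permutation-free relabelling of locations.

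The key algebraic step is to verify that each single-agent deviation in the cycle is strictly beneficial under the winner rule: this reduces to comparing the supports $s_i(\vec{x})=\int_{x_i-w/2}^{x_i+w/2}f(x)/c(x)\,\mathrm{d}x$ before and after the move and checking that the agent's membership in, or the size of, the winner set $W$ changes in his favor. Because the supports are piecewise linear in the width-to-overlap geometry under a uniform~$f$, this comparison is a finite arithmetic check at each step.

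The main obstacle is the closure of the cycle: it is easy to produce improvement \emph{paths}, but the paper already shows that for $n\le 3$ a pure Nash equilibrium exists, so a three-agent path of strict improvements can terminate at an equilibrium. I therefore expect the construction to require $n\ge 4$ agents (or asymmetric widths), with the cycle being driven by an essentially ``rock--paper--scissors'' structure in which the winner set rotates among overlapping subsets. Once such an instance is pinned down, the rest of the proof is the brief Monderer--Shapley contradiction sketched above.
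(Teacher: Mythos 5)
Your overall strategy---derive a contradiction from an improvement cycle, or more generally from two paths between the same pair of profiles that force inconsistent signs on $\Phi$---is the same idea the paper uses. But your proposal stops at a plan: the entire content of this theorem is the explicit counterexample, and you never produce one. You describe the mechanism you would exploit (discontinuous jumps in the winner set) but do not pin down a distribution, widths, a starting profile, or the sequence of moves, and you do not verify any of the ``finite arithmetic checks'' you defer to. As written, there is no proof.

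Worse, your heuristic for where to look points away from the construction that works. You argue that since a pure Nash equilibrium exists for $n\le 3$, a three-agent improvement path ``can terminate at an equilibrium,'' and you therefore expect to need $n\ge 4$. This is a non sequitur: existence of a pure equilibrium does not rule out improvement cycles (it is the finite improvement property, not equilibrium existence, that is tied to ordinal potentials), and indeed the paper's counterexample uses exactly three agents of equal width $w=1/3$ with a non-uniform step density ($f=4/3$ on $[0,1/3)$ and $[2/3,1]$, $f=1/3$ in between). Your insistence on uniform $f$ is also limiting, since the asymmetry of the density is what creates the two ``good'' regions at the ends. Finally, note that the paper does not need a strict improvement cycle at all: it compares Path 2 (one step in which the mover's winner utility strictly increases, forcing $\Phi$ to increase) against Path 1 (four steps in which the mover's winner utility strictly decreases or is unchanged, forcing $\Phi$ to decrease), using the fact that under the ``if and only if'' definition of an ordinal potential a step with zero utility change forces zero potential change. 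Allowing such neutral steps makes the construction substantially easier than the fully strict cycle you set out to build.
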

We prove by contradiction.
\begin{proof}
Consider the distribution function is
	\begin{gather*}
	f(x)=
	\begin{cases}
	4/3  & x\in[0,1/3)\\
	1/3  & x\in[1/3,2/3)\\
	4/3  & x\in[2/3,1]
	\end{cases}
	\end{gather*}
There are three agents with same width $w=1/3$.
We give two different paths from the location profile (1/6,1/6,1/6) to
(1/6,1/6,5/6). 

Path 1:
$(1/6, 1/6, 1/6)\rightarrow (1/6, 1/6, 5/9)\rightarrow
(1/6, 5/9, 5/9) \rightarrow
(1/6, 5/9, 5/6) \rightarrow
(1/6, 1/6, 5/6)
$

Path 2:
$(1/6, 1/6, 1/6)\rightarrow
(1/6, 1/6, 5/6)$.

Suppose there exists a potential function $\Phi$.
In Path 1, $u_3$ decreases in the first step, $u_2$ decreases in the second step. The support utility of the deviating agent does not change in the following steps.
By definition of $\Phi$, we should have $\Phi(1/6,1/6,5/6) < \Phi(1/6,1/6,1/6)$.

In Path 2, $u_3$ increases in the first step.
By definition, we should have $\Phi(1/6,1/6,5/6) > \Phi(1/6,1/6,1/6)$, contradiction.
\end{proof}

	In this setting, there exists a new strategy that an agent increase winner utility  by decreasing the support utility of both winner and himself. Consider the following example.
	
	{\bf Example.}
	Let the distribution $f(x)$ be
	\begin{displaymath}
	f(x)=\left\{\begin{array}{ll}
	5/4  & x\in[0,0.4] \\
	5/6 & x\in(0.4,1]\\
	\end{array}\right.
	\end{displaymath} 
	
		\begin{figure}[h] %  figure placement: here, top, bottom, or page
	   \centering
	   \includegraphics[width=3.4in]{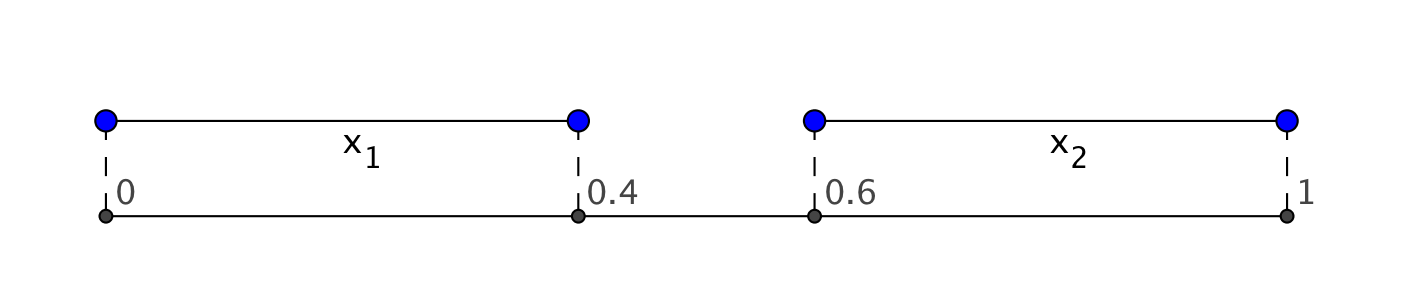} 
	   \caption{example}
	   \label{fig:2}
	\end{figure}
	There are two agents, and the location profile is $\vec{x}=(0.2,0.8)$. The width are equal $w_1=w_2=0.4$.
	In this case, agent 1 is the winner and $u_1=0.5$, $u_2=10/3$.
	However, agent 2 can move to $x_1$ and share the support from $[0,0.4]$ with agent 1.
	These two agents' support utility become $0.25$ and both agents are winners.
	Notice that agent 2 becomes a winner by decreasing both agents' support utility.

	When we consider winner utility maximization setting, we restrict to the case all agents have the same width $w_i=w$.
	First we prove a lemma which will be used frequently.
	This lemma roughly gives a situation where two agents have no incentive to deviate.
	\begin{lemma}
	Fix $k\geq 0$ agents' locations $\vec{x}$ at first\footnote{does not need to be a Nash equilibrium.}. Let $X$ be the set of maximizers of $$\int_{x-\frac{w}{2}}^{x+\frac{w}{2}}\frac{f(y)}{c(y,\vec{x})+1}\,\mathrm{d}y$$
	%A new agent has a set of best choices to maximize his support utility. 
	Suppose there are two new agents A and B. If both two agents choose the two locations $x_A,x_B\in X$($x_A$ and $x_B$ could be the same) simultaneously, then both agents have the same support utility,
	and both A and B cannot have more support utility than the other by changing location.
	\label{lemma:wzh1}
	\end{lemma}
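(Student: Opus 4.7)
The plan is to rewrite both utilities in a common form that isolates a single-agent contribution (maximized exactly on $X$ by definition) from a symmetric overlap penalty. Let $g(x) := \int_{x-w/2}^{x+w/2} \frac{f(y)}{c(y,\vec{x})+1}\,\mathrm{d}y$, so that $X = \argmax_x g(x)$. When agents A and B are simultaneously placed at $x_A$ and $x_B$, the total congestion at a point $y\in R_A$ is $c(y,\vec{x})+1$ if $y\notin R_B$ and $c(y,\vec{x})+2$ if $y\in R_B$. Splitting the integral defining $u_A$ across $R_A\setminus R_B$ and $R_A\cap R_B$, then adding and subtracting $\int_{R_A\cap R_B} f(y)/(c(y,\vec{x})+1)\,\mathrm{d}y$, yields the decomposition
\[
u_A \;=\; g(x_A) - I(x_A,x_B), \qquad u_B \;=\; g(x_B) - I(x_A,x_B),
\]
where
\[
I(x_A,x_B) \;:=\; \int_{R_A\cap R_B} f(y)\Bigl[\tfrac{1}{c(y,\vec{x})+1} - \tfrac{1}{c(y,\vec{x})+2}\Bigr]\mathrm{d}y \;\ge\; 0
\]
is symmetric in A and B.

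The first claim is then immediate: since $x_A,x_B\in X$ we have $g(x_A)=g(x_B)=\max g$, and the very same penalty $I(x_A,x_B)$ appears in both utilities, so $u_A=u_B$.

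For the second claim, suppose A deviates to $x'_A$ with B still at $x_B$. Running the same decomposition (with $R_A$ replaced by $R'_A$) gives
\[
u'_A \;=\; g(x'_A) - I(x'_A,x_B), \qquad u'_B \;=\; g(x_B) - I(x'_A,x_B),
\]
so $u'_A - u'_B = g(x'_A) - g(x_B) \le 0$ because $x_B$ is a maximizer of $g$. Hence A cannot strictly overtake B by moving; a symmetric argument handles B's deviations.

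The main obstacle is purely bookkeeping: one must track the congestion count carefully across the four regions $R_A\setminus R_B$, $R_B\setminus R_A$, $R_A\cap R_B$, and the complement, and check that the overlap penalty enters $u_A$ and $u_B$ identically so that it cancels in the difference. The degenerate case $x_A=x_B$, where $R_A=R_B$ and the two new agents share the density in that interval evenly on top of the existing congestion, falls out of exactly the same formulas without modification.
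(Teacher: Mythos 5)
Your proof is correct and follows essentially the same route as the paper: both arguments isolate the symmetric overlap penalty on $R_A\cap R_B$ (the paper writes it as a chain of integral identities and a proof by contradiction, you package it as the explicit term $I(x_A,x_B)$ that cancels in $u_A-u_B$), and both reduce each claim to comparing values of $g$ at maximizers. Your presentation is if anything slightly cleaner, and it handles the degenerate case $x_A=x_B$ explicitly, which the paper glosses over.
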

	\begin{proof}
	By definition of $x_A$ and $x_B$, we have $$\int_{x_A-\frac{w}{2}}^{x_A+\frac{w}{2}}\frac{f(x)}{c(x)+1}\,\mathrm{d}x=\int_{x_B-\frac{w}{2}}^{x_B+\frac{w}{2}}\frac{f(x)}{c(x)+1}\,\mathrm{d}x$$
	When A and B are located at the same time, their attraction interval may overlap.
	This will decrease the support from clients in the intersection interval, 
	but the decrements in two support utility are the same.
	We use $R_A$ denote the attraction interval $[x_A-\frac{w}{2},x_A+\frac{w}{2}]$ and 
	$R_B$ denote the interval $[x_B-\frac{w}{2},x_B+\frac{w}{2}]$.
	Formally, agent A's support utility will be
	\begin{align*}
	&\int_{R_A-R_B}\frac{f(x)}{c(x)+1}\,\mathrm{d}x+\int_{R_A\cap R_B}\frac{f(x)}{c(x)+2}\,\mathrm{d}x\\
	=&\int_{R_A}\frac{f(x)}{c(x)+1}\,\mathrm{d}x-\int_{R_A\cap R_B}f(x)(\frac{1}{c(x)+2}-\frac1{c(x)+1})\,\mathrm{d}x\\
	=&\int_{R_B}\frac{f(x)}{c(x)+1}\,\mathrm{d}x-\int_{R_A\cap R_B}f(x)(\frac{1}{c(x)+2}-\frac1{c(x)+1})\,\mathrm{d}x\\	
	=&\int_{R_B-R_A}\frac{f(x)}{c(x)+1}\,\mathrm{d}x+\int_{R_A\cap R_B}\frac{f(x)}{c(x)+2}\,\mathrm{d}x.
	\end{align*}
	which is same as agent B's support utility. 
	
	For the second part, we prove by contradiction.
	Suppose A moves to $x_{A'}$ and gets more support utility than B, then
%	\begin{eqnarray*}
%	&\int_{R_{A'}-R_B}\frac{f(x)}{c(x)+1}\,\mathrm{d}x+\int_{R_{A'}\cap R_B}\frac{f(x)}{c(x)+2}\,\mathrm{d}x&>\\
%	&\int_{R_B-R_{A'}}\frac{f(x)}{c(x)+1}\,\mathrm{d}x+\int_{R_{A'}\cap R_B}\frac{f(x)}{c(x)+2}\,\mathrm{d}x&\\
%	\Rightarrow 	&\int_{R_{A'}-R_B}\frac{f(x)}{c(x)+1}\,\mathrm{d}x+\int_{R_{A'}\cap R_B}\frac{f(x)}{c(x)+1}\,\mathrm{d}x&>\\
%	&\int_{R_B-R_{A'}}\frac{f(x)}{c(x)+1}\,\mathrm{d}x+\int_{R_{A'}\cap R_B}\frac{f(x)}{c(x)+1}\,\mathrm{d}x&\\
%	\Rightarrow	
%	&\int_{R_{A'}}\frac{f(x)}{c(x)+1}\,\mathrm{d}x>\int_{R_B}\frac{f(x)}{c(x)+1}\,\mathrm{d}x&
%	\end{eqnarray*}

\begin{align*}
	&\int_{R_{A'}-R_B}\frac{f(x)}{c(x)+1}\,\mathrm{d}x+\int_{R_{A'}\cap R_B}\frac{f(x)}{c(x)+2}\,\mathrm{d}x>\\
	&\int_{R_B-R_{A'}}\frac{f(x)}{c(x)+1}\,\mathrm{d}x+\int_{R_{A'}\cap R_B}\frac{f(x)}{c(x)+2}\,\mathrm{d}x\\
	\Rightarrow 	&\int_{R_{A'}-R_B}\frac{f(x)}{c(x)+1}\,\mathrm{d}x+\int_{R_{A'}\cap R_B}\frac{f(x)}{c(x)+1}\,\mathrm{d}x>\\
	&\int_{R_B-R_{A'}}\frac{f(x)}{c(x)+1}\,\mathrm{d}x+\int_{R_{A'}\cap R_B}\frac{f(x)}{c(x)+1}\,\mathrm{d}x\\
	\Rightarrow	
	&\int_{R_{A'}}\frac{f(x)}{c(x)+1}\,\mathrm{d}x>\int_{R_B}\frac{f(x)}{c(x)+1}\,\mathrm{d}x
\end{align*}
	which contradicts to the fact that $x_B$ is a best location. 
 Since A and B are symmetric, agent B cannot get more support utility than A neither. 
  	\end{proof}
	
	When there are 2 agents, we put both agents at the same position where maximizes $\int_{[x-w/2,x+w/2]}f(y)\,\mathrm{d}y$. By setting $k=0$ in Lemma~\ref{lemma:wzh1}, we know both agents are winners.
	Either one cannot become the unique winner by deviating.
	Thus the location profile constitutes a Nash equilibrium. 
	\begin{theorem}
	There is a pure Nash equilibrium when there are 2 agents.
	\end{theorem}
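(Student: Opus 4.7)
The plan is to construct the equilibrium explicitly using Lemma~\ref{lemma:wzh1} with $k=0$. With no agents yet placed, the congestion function $c(y,\vec{x})$ is identically zero, so the expression $\int_{x-w/2}^{x+w/2} f(y)/(c(y,\vec{x})+1)\,\mathrm{d}y$ reduces to $\int_{x-w/2}^{x+w/2} f(y)\,\mathrm{d}y$. Let $x^{*}$ be a maximizer of this integral over $x\in[w/2,1-w/2]$ (it exists by continuity of the integral and compactness of the location interval). I would take the candidate profile to be $\vec{x}=(x^{*},x^{*})$, i.e.\ both agents at the same best single-agent location.

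Next I would verify the equilibrium conditions. By the first half of Lemma~\ref{lemma:wzh1} (applied with $k=0$ and $x_A=x_B=x^{*}$), the two agents have identical support utilities, so both belong to the winner set and each receives winner utility $\tfrac{1}{2}$. It remains to show that neither agent can strictly improve by deviating. The only way to get utility strictly greater than $\tfrac{1}{2}$ is to become the \emph{unique} winner, which requires strictly larger support than the opponent. But the second half of Lemma~\ref{lemma:wzh1} says exactly that neither agent can obtain more support than the other by relocating. Hence any deviation leads to a tie (utility still $\tfrac{1}{2}$) or a loss (utility $0$), and neither deviation is profitable. This proves that $(x^{*},x^{*})$ is a pure Nash equilibrium.

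The argument is quite short because the work has been absorbed into Lemma~\ref{lemma:wzh1}; the only subtle point — and the one I would make sure to state cleanly — is the reduction from ``no profitable winner-utility deviation'' to ``no strict gain in support utility over the opponent.'' A deviating agent cannot benefit from lowering the opponent's support in isolation because, in the two-agent case, the opponent's support is not directly manipulable without also affecting the deviator's own attraction interval; the only relevant comparison is the relative support between the two agents, which is exactly what Lemma~\ref{lemma:wzh1} controls. I do not anticipate any real obstacle here beyond making this reduction explicit.
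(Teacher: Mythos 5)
Your proposal is correct and follows essentially the same route as the paper: place both agents at a maximizer of $\int_{x-w/2}^{x+w/2}f(y)\,\mathrm{d}y$ and invoke Lemma~\ref{lemma:wzh1} with $k=0$ to conclude that both tie as winners and neither can become the unique winner by deviating. Your explicit reduction from winner-utility deviations to relative support is a useful clarification but not a departure from the paper's argument.
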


	When there are 3 agents, the problem becomes quite complicated. 
	Since there is no symmetric property, there are many cases to consider when we are checking the stable equilibrium.
	We propose Algorithm 1 for 3 agents.
	\begin{itemize}
	\item Let $u_1$ denote the largest support utility that the first player could achieve, i.e.  $u_1$ is the maximum value of $\int_{x-w/2}^{x+w/2}f(y)\,\mathrm{d}y$.
	\item {\it Case 1:} If we can allocate three agents at the same time such that everyone achieves the support utility $u_1$, then we allocate them at those three locations.

	\item {\it Case 2:} If we can allocate only two agents at the same time such that every one achieves the support utility $u_1$, then we allocate agent 1 at one of the two locations, the other two agents together at the other location.
	\item Otherwise, we can allocate only one agent that achieves the support utility $u_1$. There exists a set of locations that agent 1 achieves the support utility $u_1$, we allocate agent 1 at the leftmost one, denoted by $x_1$.
			Given agent 1's location, let the largest support utility for agent 2 is $u_2$.
			\begin{itemize}
				\item {\it Case 3:} If agent 2 can achieve support utility $u_2$ at location $x_1$, we allocate agents 2 and 3 together at $x_1$.
				\item {\it Case 4:} Otherwise, if on both left side and right side of agent 1, there exists locations where agent 2 achieves support utility $u_2$. We allocate agent 2 at the rightmost position in the left part and agent 3 at the leftmost position in the right part, i.e., 
				$$x_2=\max_{x<x_1}\left\{ x\,\bigg| \int_{x-\frac{w}{2}}^{x+\frac{w}{2}} \frac{f(y)}{c(y)+1}\,\mathrm{d}y=u_2\right\},$$ $$x_3=\min_{x>x_1}\left\{ x\,\bigg| \int_{x-\frac{w}{2}}^{x+\frac{w}{2}} \frac{f(y)}{c(y)+1}\,\mathrm{d}y=u_2\right\}.$$
				\item {\it Case 5:} Otherwise, the locations that maximize agent 2's support utility lie on the same side of agent 1. We allocate agent 2 to the closest position and agent 3 at the farthest position. Let
				$$t_2=\min\left\{ x\,\bigg| \int_{x-\frac{w}{2}}^{x+\frac{w}{2}} \frac{f(y)}{c(y)+1}\,\mathrm{d}y=u_2\right\},$$ $$t_3=\max\left\{ x\,\bigg| \int_{x-\frac{w}{2}}^{x+\frac{w}{2}} \frac{f(y)}{c(y)+1}\,\mathrm{d}y=u_2\right\}.$$
				If $t_2<x_1$, then set $x_2=t_3$, $x_3=t_2$. Otherwise, set $x_2=t_2$, $x_3=t_3$.
			\end{itemize}
	\end{itemize}	
	\begin{theorem}
	When there are 3 agents, Algorithm 1 gives a pure Nash equilibrium.
	\end{theorem}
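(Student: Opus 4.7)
The plan is to verify case by case that Algorithm~1's output satisfies the Nash condition, i.e., no agent gains by unilateral deviation. I will rely on two universal bounds: (a) every single-agent support is at most $u_1$, because an agent always contributes to the congestion of her own interval; and (b) Lemma~\ref{lemma:wzh1}, which states that two agents placed at maximizers of the residual integral have equal supports and neither can outperform the other by deviation.

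Cases~1 and~2 follow directly from bound~(a). In Case~1, all three agents tie at $u_1$, and no deviation can produce support strictly above $u_1$, so no agent can become a smaller-set winner. In Case~2, agent~1 is the sole winner at $u_1$; for agents~2 and~3, any non-overlapping deviation has $\int f < u_1$ (otherwise a third simultaneous $u_1$-maximizer would exist, contradicting the Case~2 hypothesis), and any overlapping deviation decreases the support further by congestion, so neither can reach $u_1$ and become a winner.

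Cases~3--5 apply Lemma~\ref{lemma:wzh1} with $k=1$, fixing agent~1 and treating agents~2 and~3 as the new pair, to conclude immediately that $s_2 = s_3$ and that neither can dominate the other. It remains to show (i) that agent~1 has no profitable deviation, and (ii) that agents~2 and~3 cannot beat agent~1 by a move that the lemma does not already cover---notably, by jumping to the opposite side of $x_1$. For Case~3, in which all three agents overlap at $x_1$, I would derive from the defining identity $u_2 = u_1/2$ the algebraic inequality $\int_{R'\setminus R_1} f \le \frac{1}{2}\int_{R_1 \setminus R'} f$ for every $x'$, which rules out any profitable deviation directly by turning the support comparison $s_2 - s_1$ into a sign-definite integral.

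The main obstacle is Case~5, where both $u_2$-maximizers lie on the same side of agent~1. The algorithm's selection of the ``closest'' and ``farthest'' positions, together with the sign-of-$t_2-x_1$ case split, is engineered to block exactly the deviations not covered by Lemma~\ref{lemma:wzh1}: flipping across $x_1$ is impossible because no $u_2$-maximizer lies on the other side (by Case~5's hypothesis), so any such deviation gives strictly less than $u_2$; and swapping within the same side is controlled by the extremal choice of endpoints together with the lemma. The delicate step I expect to take the most care is showing that no intermediate position between the two chosen endpoints (with the accompanying congestion against agent~1) gives a strict improvement, which requires a careful comparison between the algorithm's endpoint choices and the structure of the $u_2$-maximizer set on the occupied side.
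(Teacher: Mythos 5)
Your skeleton matches the paper's: a case analysis following the algorithm, Lemma~\ref{lemma:wzh1} applied to the pair of agents $2$ and $3$, and the observation that no deviation yields support above $u_1$. But two genuine gaps remain. First, in Case~2 you conclude that agents 2 and 3 cannot win because neither can \emph{reach} $u_1$. Winning does not require reaching $u_1$: an overlapping deviation also drags agent~1's support below $u_1$, and the deviator could in principle tie with or pass agent~1 at a lower level --- exactly the phenomenon the paper illustrates with its two-agent example just before this theorem. The paper closes this by showing that any winning deviation $R_2'$ would have to have positive $f$-overlap with both $R_1$ and $R_3$, that ignoring agent~3 the comparison $\int_{R_1}f\geq\int_{R_2'}f$ already forces $s_1\geq s_2$, and that the extra congestion from $R_3$ (which does not meet $R_1$) then makes the inequality strict. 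Some version of this comparison is needed wherever you argue ``cannot reach $u_1$, hence cannot win.''

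Second, and more seriously, in Cases~4 and~5 you implicitly treat agent~1 as (weakly) the current winner. In Case~4 the winner set at the algorithm's output can be $\{2,3\}$: agent~1 is squeezed between $R_2$ and $R_3$ and loses support to both, while each of agents 2 and 3 overlaps only $R_1$. The hardest part of the paper's proof (Case~4.1.1) is showing that agent~1, currently a \emph{loser}, cannot deviate to become a winner; this requires a chain of integral inequalities split on whether $R_1'$ meets $R_3$ and on whether it meets $R_3-R_2$, and it uses the extremal (rightmost-on-the-left, leftmost-on-the-right) choice of $x_2,x_3$. Neither your bound~(a) nor Lemma~\ref{lemma:wzh1} addresses this; your item ``(i) agent 1 has no profitable deviation'' is precisely the missing content, not a routine remainder. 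Likewise, in Case~5 the first step of the paper's argument is to rule out the winner set $\{2,3\}$ altogether, which it does via the containment $R_3\cap R_1\subseteq R_3\cap R_2$ guaranteed by the algorithm's ordering of $t_2$ and $t_3$; your sketch does not supply this, and the subsequent sub-cases (winner set $\{1\}$ versus $\{1,2,3\}$) each need their own deviation analysis for agent~1.
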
	
	
	Here is the idea of Algorithm 1.
	In most of time we allocate agent 1 where he gets the largest support utility. 
	Then we allocate agent 2 and 3 to get the largest support utility as possible.
	Agent 2 and 3 have the same support utility and hinder each other.
	If agent 2 wants to get the same support utility as agent 1 by decreasing both $u_1$ and $u_2$, then $u_3$ becomes the largest, and agent 3 is the unique winner.
	Thus the location profile forms a Nash equilibrium.

	\begin{proof}
	The proof follows algorithm's structure. 
	In each case, we consider who wins and whether the agents' attraction intervals intersect. In all cases, we prove no one has incentive to deviate.

	%We consider the first case that we can allocate 3 agents such that everyone achieves the maximum support utility $u_1$.
		In Case 1, winner set is $\{1,2,3\}$. 
		Keeping agent 2's location fixed, by Lemma~\ref{lemma:wzh1}, agent 1 cannot get more support utility than agent 3.
		%Keeping agent 3's location fixed, by Lemma~\ref{lemma:wzh1}, agent 1 cannot get more support utility than agent 2.
		
		Thus if agent 1 moves, the other two agents have at least the same support utility.
		Using similar arguments, no one gets more support utility than any other player.
		So on one	has incentive to deviate.
		
		%Then we consider the second case.
 %let $x_1$ be the first agent's position and $x_2$ be the second agent's position.
		In Case 2, winner set is $\{1\}$.
		Since agent 2 and 3 are at the same location, we only need to prove agent 2 does not have incentive to deviate.
%		By Lemma~\ref{lemma:wzh1}, after the second agent moves, the third agent have at least the same support utility.		
		We prove by contradiction. Let agent 2 could become a winner by deviating to $x_2'$. Let $R_2'$ denote the corresponding attraction interval.
		If $R_2'$ does not intersect with $R_1$, agent 1 always has more support utility than agent 2. Then $R_2'$ intersects with $R_1$. For same reason, $R_2'$ intersects with $R_3$.
		Without agent 3, agent 2 has at most the same support utility as agent 1.
		But $R_3$ only has intersection with $R_2'$, this intersection makes agent 2 have strictly less support utility than the agent 1. 
		So agent 2 cannot become a winner and nobody has incentive to deviate.

		In Case 3, winner set is $\{1,2,3\}$. 
		Since they have the same location, we only need to prove agent 3 has no incentive to deviate.
		Suppose agent 3 has incentive to deviate to $x_3'$ with attraction interval $R_3'$, then he must become the unique winner.
		Formally,
%		\begin{eqnarray*}
%		&\int_{R_3'\cap R_1} \frac{f(x)}{3}\,\mathrm{d}x+\int_{R_3'-R_1}f(x)\,\mathrm{d}x \geq&\\
%		&\int_{R_3'\cap R_1} \frac{f(x)}{3}\,\mathrm{d}x+\int_{R_1-R_3}\frac{f(x)}{2}\,\mathrm{d}x&\\
%		\Rightarrow &\int_{R_3'\cap R_1} \frac{f(x)}{2}\,\mathrm{d}x+\int_{R_3'-R_1}f(x)\,\mathrm{d}x \geq&\\
%		&\int_{R_3'\cap R_1} \frac{f(x)}{2}\,\mathrm{d}x+\int_{R_1-R_3}\frac{f(x)}{2}\,\mathrm{d}x&
%		\end{eqnarray*}
	
	\begin{align*}
		&\int_{R_3'\cap R_1} \frac{f(x)}{3}\,\mathrm{d}x+\int_{R_3'-R_1}f(x)\,\mathrm{d}x \geq\int_{R_3'\cap R_1} \frac{f(x)}{3}\,\mathrm{d}x+\int_{R_1-R_3}\frac{f(x)}{2}\,\mathrm{d}x&\\
		\Rightarrow &\int_{R_3'\cap R_1} \frac{f(x)}{2}\,\mathrm{d}x+\int_{R_3'-R_1}f(x)\,\mathrm{d}x \geq\int_{R_3'\cap R_1} \frac{f(x)}{2}\,\mathrm{d}x+\int_{R_1-R_3}\frac{f(x)}{2}\,\mathrm{d}x&
	\end{align*}
		Consider the situation when only agent 1 has been located.
		The left side of the inequality is agent 2's support utility by choosing $x_3'$.
		The right side of the inequality is agent 2's support utility by choosing $x_1$.
		That means agent 2 gets more support utility by choosing $x_3'$ than choosing $x_1$, contradicting to the assumption.
		Hence agent 3 has no incentive to deviate.

		In Case 4, we have $u_2=u_3$ by Lemma~\ref{lemma:wzh1}. 
		There are 3 possibilities about who the winners are.
		
		{\it Case 4.1:}Winner set is $\{2,3\}$. 
		Agent 2 and 3 have no incentive to deviate by Lemma~\ref{lemma:wzh1}.
		We next prove agent 1 has no incentive to deviate.
		If $R_2\cap R_1=\emptyset$, agent 3 cannot have more support utility than agent 1.
		So we have $R_2\cap R_1\neq \emptyset$ and $R_1\cap R_3\neq \emptyset$.
		We consider whether $R_2\cap R_3$ is empty.
		
		{\it Case 4.1.1:} $R_2\cap R_3\neq \emptyset$.
	 	\begin{figure}[h] %  figure placement: here, top, bottom, or page
	   \centering
	   \includegraphics[width=3in]{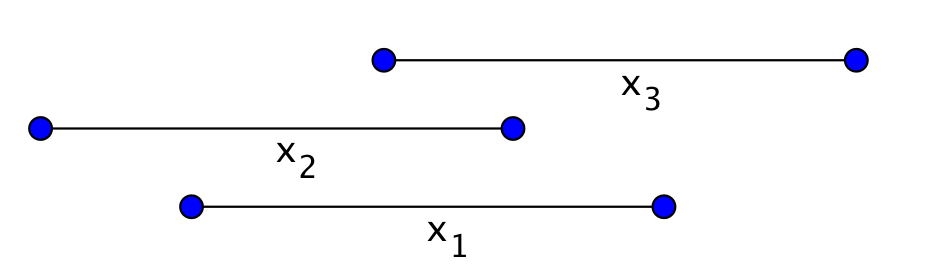} 
	   \caption{example}
	   \label{fig:example3}
	\end{figure}
		Suppose agent 1 benefits by deviating to $x_1'$. Let $R_1'$ denote the new attraction interval.
		If $R_1'\cap R_3=\emptyset$, then we have
		\begin{align*}
		u_1'&=\int_{R_1'-R_2}f(x)\,\mathrm{d}x+\int_{R_1'\cap R_2}\frac{f(x)}{2}\,\mathrm{d}x\\
		&\leq \int_{R_1'-R_1}f(x)\,\mathrm{d}x+\int_{R_1'\cap R_1}\frac{f(x)}{2}\,\mathrm{d}x\\
		&\leq\int_{R_3\cap R_1}\frac{f(x)}{2}\,\mathrm{d}x+\int_{R_3- R_1}f(x)\,\mathrm{d}x\\
		&<\int_{R_3\cap R_2}\frac{f(x)}{2}\,\mathrm{d}x+\int_{R_3- R_2}f(x)\,\mathrm{d}x
		\end{align*}
		This is the agent 3's support utility after agent 1's deviation. Agent 1 cannot become a winner.
		
		If $R_1'\cap R_3\neq \emptyset$ and $R_1'\cap(R_3-R_2)=\emptyset$, then we have
		\begin{align*}
		u_1'&=\int_{R_1'-R_2}f(x)\,\mathrm{d}x+\int_{R_2-R_3}\frac{f(x)}{2}\,\mathrm{d}x+\int_{R_1'\cap R_3}\frac{f(x)}{3}\,\mathrm{d}x\\
		&\leq\int_{R_2-R_3}\frac{f(x)}{2}\,\mathrm{d}x+\int_{R_1'-R_3}\frac{f(x)}{3}\,\mathrm{d}x+\int_{R_2-R_1'}\frac{f(x)}{2}\,\mathrm{d}x\\
		&<\int_{R_2-R_1}f(x)\,\mathrm{d}x+\int_{R_1-R_3}\frac{f(x)}{2}\,\mathrm{d}x+\int_{R_1'-R_3}\frac{f(x)}{3}\,\mathrm{d}x+\int_{R_2-R_1'}\frac{f(x)}{2}\,\mathrm{d}x\\
		&=\int_{R_3-R_1}f(x)\,\mathrm{d}x+\int_{R_1-R_2}\frac{f(x)}{2}\,\mathrm{d}x+\int_{R_1'-R_3}\frac{f(x)}{3}\,\mathrm{d}x+\int_{R_2-R_1'}\frac{f(x)}{2}\,\mathrm{d}x
		\end{align*}
		This is the agent 3's support utility after agent 1's deviation.
		
		If $R_1'\cap(R_3-R_2)\neq\emptyset$. Since agent 1 has at least the same support utility as agent 2, we have
		\begin{align*}
%		&&u_1'=\int_{R_1'-R_2}\frac{f(x)}{2}\,\mathrm{d}x+\int_{R_1'\cap R_2\cap R_3}\frac{f(x)}{3}\,\mathrm{d}x+\int_{R_1'-(R_2\cap R_3}\frac{f(x)}{2}\,\mathrm{d}x\\
%		&&u_1'=\int_{R_1'-R_2}\frac{f(x)}{2}\,\mathrm{d}x+\int_{R_1'\cap R_2\cap R_3}\frac{f(x)}{3}\,\mathrm{d}x+\int_{R_1'-(R_2\cap R_3}\frac{f(x)}{2}\,\mathrm{d}x\\
		\int_{R_1'-R_2}\frac{f(x)}{2}\,\mathrm{d}x\geq \int_{R_2-R_1'}f(x)\,\mathrm{d}x,
		\end{align*}
		contradicting to the definition of agent 2's location.
		To sum up, agent 1 cannot become a winner by deviating in Case 4.1.1.
		
		{\it Case 4.1.2:} $R_2\cap R_3= \emptyset$. The proof is similar to the that in Case 4.1.1 and thus omitted.
		
		{\it Case 4.2:}Winner set is $\{1\}$.
		Then agent 1 has no incentive to move.
		By definition of agent 2's location, we have
%		\begin{eqnarray*}
%		&\int_{R_2-R_1}f(x)\,\mathrm{d}x+\int_{R_1\cap R_2}\frac{f(x)}{2}\,\mathrm{d}x> \int_{R_1}\frac{f(x)}{2}\,\mathrm{d}x&\\
%		&\int_{R_2-R_1}f(x)\,\mathrm{d}x> \int_{R_1-R_2}\frac{f(x)}{2}\,\mathrm{d}x&
%		\end{eqnarray*}
	\begin{gather*}
		\int_{R_2-R_1}f(x)\,\mathrm{d}x+\int_{R_1\cap R_2}\frac{f(x)}{2}\,\mathrm{d}x> \int_{R_1}\frac{f(x)}{2}\,\mathrm{d}x\\
		\int_{R_2-R_1}f(x)\,\mathrm{d}x> \int_{R_1-R_2}\frac{f(x)}{2}\,\mathrm{d}x
	\end{gather*}
%	\begin{figure}[h] %  figure placement: here, top, bottom, or page
%	   \centering
%	   \includegraphics[width=3in]{004_1.png} 
%	   \caption{example}
%	   \label{fig:example1}
%	\end{figure}
		We claim that $R_1\cap R_2 \cap R_3=\emptyset$, otherwise we have
%		\begin{eqnarray*}
%		u_2&=&\int_{R_2-R_1}f(x)\,\mathrm{d}x+\int_{R_2-(R_1\cap R_3)}\frac{f(x)}{2}\,\mathrm{d}x+\int_{R_2\cap R_1\cap R_3}\frac{f(x)}{3}\,\mathrm{d}x\\
%		&>&\int_{R_1-R_2}\frac{f(x)}{2}\,\mathrm{d}x+\int_{R_2-(R_1\cap R_3)}\frac{f(x)}{2}\,\mathrm{d}x+\int_{R_2\cap R_1\cap R_3}\frac{f(x)}{3}\,\mathrm{d}x\\
%		&=&u_1
%		\end{eqnarray*}
		\begin{align*}
		u_2&=\int_{R_2-R_1}f(x)\,\mathrm{d}x+\int_{R_2-(R_1\cap R_3)}\frac{f(x)}{2}\,\mathrm{d}x+\int_{R_2\cap R_1\cap R_3}\frac{f(x)}{3}\,\mathrm{d}x\\
		&>\int_{R_1-R_2}\frac{f(x)}{2}\,\mathrm{d}x+\int_{R_2-(R_1\cap R_3)}\frac{f(x)}{2}\,\mathrm{d}x+\int_{R_2\cap R_1\cap R_3}\frac{f(x)}{3}\,\mathrm{d}x\\
		&=u_1
		\end{align*}
		Suppose agent 2 deviates to $x_2'$.
		If $x_2'<x_2$, $u_2$ weakly decreases and $R_1\cap R_2$ weakly shrinks. Furthermore, $u_1$ weakly increases. Agent 2 would not be the winner.
		If $x_2'\in (x_2,x_3)$, by the definition of $x_2$ and $x_3$, agent 2 has less support utility than agent 3 no matter whether $R_2\cap R_3$ is empty.
		If $x_2'\in [x_3,1)$, $R_2'\cap R_3$. By Lemma~\ref{lemma:wzh1}, $u_2'$ is at most equal to agent 3's support utility.
		Now we prove agent 1 has strictly higher support utility than agent 3 after agent 2's deviation.
		Agent 1's support utility is
		\begin{align*}
		&\int_{R_1-R_3}f(x)\,\mathrm{d}x+\int_{R_3-R_2'}f(x)\,\mathrm{d}x+\int_{R_1\cap R_3\cap R_2'}\frac{f(x)}{3}\,\mathrm{d}x\\
		>&\int_{R_3-R_1}f(x)\,\mathrm{d}x+\int_{R_3-R_2'}\frac{f(x)}{2}\,\mathrm{d}x+\int_{R_1\cap R_3\cap R_2'}\frac{f(x)}{3}\,\mathrm{d}x\\
		>&\int_{R_3-R_1}\frac{f(x)}{2}\,\mathrm{d}x+\int_{R_3-R_2'}\frac{f(x)}{2}\,\mathrm{d}x+\int_{R_1\cap R_3\cap R_2'}\frac{f(x)}{3}\,\mathrm{d}x
		\end{align*}
		This is the agent 3's support utility.
		To sum up, agent 2 would not deviate, neither does agent 3.
		
		{\it Case 4.3:} winner set is $\{1,2,3\}$.
		The proof of agent 1 would not deviate is similar to that in Case 4.1.
		The proof of agent 2 or 3 would not deviate is similar to that in Case 4.2.

		{\it Case 5:} first we claim winner set can not be $\{2,3\}$.
		By definition, we have $R_3\cap R_1 \subset R_3\cap R_2$.
		We consider $u_1,u_2$ when there are only agent 1 and 2, then we take count into the impact of agent 3.
		\begin{align}
		u_1=&\int_{R_1-R_2}f(x)\,\mathrm{d}x+\int_{R_1\cap R_2}\frac{f(x)}{2}\,\mathrm{d}x\nonumber\\
		&+\int_{R_1\cap R_2\cap R_3}\left(\frac{f(x)}{3}-\frac{f(x)}{2}\right)\,\mathrm{d}x\label{eq1}\\
		u_2=&\int_{R_2-R_1}f(x)\,\mathrm{d}x+\int_{R_1\cap R_2}\frac{f(x)}{2}\,\mathrm{d}x\nonumber\\
		&+\int_{R_1\cap R_2\cap R_3}\left(\frac{f(x)}{3}-\frac{f(x)}{2}\right)\,\mathrm{d}x\nonumber\\
		&+\int_{(R_2\cap R_3)-R_1}\left(\frac{f(x)}{2}-f(x)\right)\,\mathrm{d}x\label{eq2}
		\end{align}
		We have $u_1\geq u_2$. Thus if agent 2 is a winner, agent 1 is a winner too.
		
		{\it Case 5.1:} winner set is $\{1\}$.
		In this sub-case, the argument is independent with the leftmost property of $x_1$. Then w.l.o.g. we assume $x_1<x_2\leq x_3$.
		Agent 1 has no incentive to deviate. Suppose agent 2 deviate to $x_2'$.
		If $x_2'\in[0,x_2)\cup (x_3,1] $, agent 3 has strictly more support utility than agent 2.
		If $x_2'\in(x_2,x_3]$, agent 1's support utility weakly increase, agent 2's support utility is weakly smaller than agent 3's.
		Agent 3's support utility weakly decreases. Then agent 1 has strictly more support utility than agent 2.
		Thus agent 2 would not deviate.
		Suppose agent 3 deviates to $x_3'$.
		If $x_3'\in [0,x_2)\cup (x_3,1]$, then agent 2 has strictly more support utility than agent 3.
		If $x_3'\in[x_2,x_3)$, agent 3 has weakly less support utility than agent 2.
		But the support utility difference between agent 2 and agent 1 becomes larger.
		Thus agent 3 would not deviate.

		{\it Case 5.2:} winner set is $\{1,2,3\}$.
		The proof that agent 2 and 3 have no incentive to deviate is same as in Case 5.1.
		Since $u_1=u_2$ and by Equation (\ref{eq1}) and (\ref{eq2}), we have $R_2\cap R_3=\emptyset$.
		Since $u_1=u_2$, $x_2$ is also a best choice for agent 1 at the beginning. 
		By the leftmost property of $x_1$, we know $x_1<x_2$.
		Suppose agent 1 deviates to $x_1'$.
		If $x_1'\in[0,x_1)\cup (x_1,x_2)$, agent 1 has strictly less support utility than agent 2.
		If $x_1'\in[x_2,1)$ and $R_1'\cap R_3=\emptyset$, agent 1 has strictly less support utility than agent 3.
		\begin{align*}
		u_1'=&\int_{R_1'-R_2}f(x)\,\mathrm{d}x+\int_{R_1'\cap R_2}\frac{f(x)}{2}\,\mathrm{d}x\\
		<&\int_{R_1'-R_1}f(x)\,\mathrm{d}x+\int_{R_1'\cap R_1}\frac{f(x)}{2}\,\mathrm{d}x \\
		\leq&u_3
		\end{align*}
		If $x_1'\in[x_2,1)$ and $R_1'\cap R_3\neq \emptyset$, 
		\begin{align*}
		u_1'&=\int_{R_1'-R_2-R_3}f(x)\,\mathrm{d}x+\int_{R_1'\cap (R_2\cup R_3)}\frac{f(x)}{2}\,\mathrm{d}x\\
		&<\int_{R_1'-R_2}f(x)\,\mathrm{d}x+\int_{R_1'\cap R_2}\frac{f(x)}{2}\,\mathrm{d}x
		\end{align*}
		This is agent 2's support utility after deviation.
		Agent 1 has strictly less support utility than agent 2.
		Hence in case 5.2, no agent would deviate.
%		There are three sub-cases to consider. 
%		(1)When $x_2=x_1$, all three agents are at the same location. Everyone is the winner.
%		If this is not a Nash equilibrium, then the first agent has incentive to move.
%		Suppose the second agent will move to $x_2'$ and he is the only winner now. We should have:
%\begin{eqnarray*}
%&&\int_{R_{2'}\cap R_1}\frac{f(x)}3\,\mathrm{d}x+\int_{R_{2'}-R_1}f(x)\,\mathrm{d}x>\int_{R_{2'}\cap R_1}\frac{f(x)}3\,\mathrm{d}x+\int_{R_1-R_{2'}}\frac{f(x)}2\,\mathrm{d}x\\
%&\Rightarrow&\int_{R_{2'}\cap R_1}\frac{f(x)}2\,\mathrm{d}x+\int_{R_{2'}-R_1}f(x)\,\mathrm{d}x>\int_{R_{2'}\cap R_1}\frac{f(x)}2\,\mathrm{d}x+\int_{R_1-R_{2'}}\frac{f(x)}2\,\mathrm{d}x\\
%&\Rightarrow&\int_{R_{2'}\cap R_1}\frac{f(x)}2\,\mathrm{d}x+\int_{R_{2'}-R_1}f(x)\,\mathrm{d}x>\int_{R_1}\frac{f(x)}2\,\mathrm{d}x
%\end{eqnarray*}
%This means $x_2'$ gives strictly more support utility than $x_2$, contradiction.
%(2)
	\end{proof}

When $w=0.5$, the attraction intervals overlap in general.
We make use of this property and give Algorithm 2 to construct a Nash equilibrium with width $0.5$.

\begin{itemize}
	\item Let $u_1$ denote the largest support utility that agent 1 could achieve, i.e.  $u_1=\max\{ \int_{[x-w/2,x+w/2]} f(y) \,\mathrm{d}y\}$.
	\item If we can allocate 2 agents at the same time such that everyone achieves the support utility $u_1$, i.e. 
$$\int_{[0,0.5]}f(x)\,\mathrm{d}x=\int_{[0,0.5]}f(x)\,\mathrm{d}x=u_1.$$ When there are $k$ agents, we allocate $\lceil k/2 \rceil$ agents at $0.25$ and 
$k- \lceil k/2 \rceil$ agents at $0.75$.
	\item If we can allocate only one agent that achieves the support utility $u_1$, then we allocate the first agent at $x_1$. Define $t$ such that everyone in the first t agents maximizes the support utility at $x_1$ given the previous agents' locations, but this no longer holds for the $(t+1)$-th agent. 
	\begin{itemize}
	\item When there are $k\leq t+1$ agent, we allocate them together at $x_1$.
	\item When there are $k\geq t+2$ agent, let $\vec{x}=(x,...,x)$ (the number of $x$ is $k-2$). 
	We define left largest support utility $ll(x)$ and right largest support utility $rl(x)$:
	$$ll(x)=\max\left\{\int^{z+0.25}_{z-0.25}\frac{f(y)}{c(y,\vec{x})}\,\mathrm{d}y \,\bigg|\,z\leq x\right\}$$
	$$rl(x)=\max\left\{\int^{z+0.25}_{z-0.25}\frac{f(y)}{c(y,\vec{x})}\,\mathrm{d}y \,\bigg|\,z\geq x\right\}$$	
	Here $c(y,\vec{x})=k-2$ if $|y-x|\leq w/2$ and  $c(y,\vec{x})=0$ if $|y-x|>w/2$.
	There exists $x^*$ such that $ll(x^*)=rl(x^*)$. 
	Let $x_l\leq x$ be a solution  of $z$ for $\int^{z+0.25}_{z-0.25}\frac{f(y)}{c(y)}\,\mathrm{d}y=ll(x^*)$
	and $x_r\geq x$ be a solution  of $z$ for $\int^{z+0.25}_{z-0.25}\frac{f(y)}{c(y)}\,\mathrm{d}y=rl(x^*)$.
	We put the first $k-2$ agents at $x^*$, $(k-1)$-th agent at $x_l$, $k$-th agent at $x_r$.
	\end{itemize}
\end{itemize}

	In the second case and when there are more than $k\geq t+2$ agents, based on the width is $0.5$ the union of the support of $(k-1)$-th agent and $k$-th agent will cover the support of previous agents.
	 In fact, the $(k-1)$-th and $k$-th agents are the unique two winners. 

	\begin{theorem}
	When $w=0.5$, Algorithm 2 gives a Nash equilibrium.
	\end{theorem}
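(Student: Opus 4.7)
The plan is to verify Nash equilibrium branch by branch following Algorithm~2's case structure, exploiting the key geometric fact that when $w=0.5$ any two admissible attraction intervals $[x-0.25,x+0.25]$ with $x\in[0.25,0.75]$ overlap (except for the degenerate pair at $0.25$ and $0.75$); in particular, two agents flanking a third always jointly cover the third's attraction interval.

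For the first branch (two distinct locations simultaneously attaining $u_1$), the only way this can happen is $u_1=1/2$ with maximizers exactly at $0.25$ and $0.75$, so the two groups of $\lceil k/2\rceil$ and $k-\lceil k/2\rceil$ agents sit on disjoint intervals and share the mass $1/2$ equally within each group. Lemma~\ref{lemma:wzh1} (applied with the opposing group held fixed) immediately rules out any within-side move, and any cross-side move only introduces overlap and strictly lowers support. For the colocated branch ($k\le t+1$), the defining property of $t$ guarantees $x_1$ remains a best response whenever at most $t$ other agents sit there; so all $k$ agents tie at $u_1/k$, are all winners, and any deviator facing $k-1$ agents at $x_1$ cannot strictly improve.

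The main technical branch is $k\ge t+2$. I would first establish existence of $x^*$ by an intermediate-value argument, noting that $ll(\cdot)$ is continuous and non-increasing while $rl(\cdot)$ is continuous and non-decreasing, with boundary values that force a crossing inside the admissible range. Then, since $x_l\le x^*\le x_r$ with $x_r-x_l\le 1/2$, the union of the two flanking attraction intervals contains the attraction interval of $x^*$. Consequently every point of an $x^*$-agent's interval is shared with at least one flanking agent in addition to its $k-3$ peers, which strictly lowers each $x^*$-agent's support below that of the flanking pair; by Lemma~\ref{lemma:wzh1} the flanking pair tie exactly at $ll(x^*)=rl(x^*)$, so $\{k-1,k\}$ forms the unique winner set.

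It remains to rule out profitable deviations. By Lemma~\ref{lemma:wzh1} the flanking agents cannot beat each other, and any move they make toward $x^*$ only enlarges overlap without changing the winner structure. For an $x^*$-agent to become a winner, it would have to reach support at least $ll(x^*)=rl(x^*)$ at some location given the remaining $k-1$ agents; but left-side supports are bounded by $ll(x^*)$ and right-side supports by $rl(x^*)$ under $k-2$ peers, and removing one peer from $x^*$ while the corresponding flanking agent still covers the same region keeps the bound intact. The hardest part will be this last congestion-bookkeeping step: formalizing that the flanking agent's coverage ``replaces'' the missing peer for the $ll/rl$ bound, which I expect to reduce to the monotonicity of $1/(m+1)$ in $m$ combined with a careful case split on where the deviation lands relative to $x_l$, $x^*$, and $x_r$.
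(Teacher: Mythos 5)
Your proposal is more ambitious than the paper's own argument: the paper explicitly omits the main equilibrium verification and proves only that the profile in the $k\geq t+2$ branch is well defined, i.e.\ that a point $x^*$ with $ll(x^*)=rl(x^*)$ exists. On that one part --- the only part the paper actually proves --- you have the monotonicity reversed. Since $ll(x)$ is a maximum over the growing set $\{z\leq x\}$, and increasing $x$ also shifts the congested region $[x-0.25,x+0.25]$ away from any fixed $z\leq x$ (so the value at each fixed $z$ can only rise), $ll$ is weakly \emph{increasing} and $rl$ weakly \emph{decreasing}, not the other way around. A crossing still exists, but it comes from the boundary inequalities the paper establishes ($rl\geq ll$ at the location where the $(k-1)$-th agent's best response lies to the right of $x_1$, and $rl(0.75)\leq ll(0.75)$); your phrase ``boundary values that force a crossing'' is doing real work and is not automatic, especially once the monotonicity is corrected.

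On the deviation analysis there are two genuine gaps. First, in the colocated branch you assert that the definition of $t$ guarantees $x_1$ ``remains a best response whenever at most $t$ other agents sit there.'' The definition says the opposite at the boundary: $x_1$ fails to be support-maximizing for an agent facing exactly $t$ others at $x_1$, which is precisely the situation every agent is in when $k=t+1$. So that case cannot be dismissed by ``cannot strictly improve'': a deviator \emph{can} strictly increase its support, and one must separately argue it still cannot become the unique winner under the winner utility. (A similar gloss occurs in the first branch, where for odd $k$ the two groups get different per-agent shares, the winner set is only the smaller group, and Lemma~\ref{lemma:wzh1} --- a statement about two agents --- does not directly cover groups of three or more.) Second, the hardest step, showing that no agent stacked at $x^*$ can match the flanking agents, is left as an acknowledged sketch, and your intermediate claim that the flankers tie ``exactly at $ll(x^*)=rl(x^*)$'' ignores their mutual overlap when $x_r-x_l<1/2$. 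Since the paper omits this verification as well, there is no reference argument to compare against, but as written your proposal does not yet constitute a proof of the theorem.
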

	The main proof is omitted due to the space. We only prove that Algorithm 2 could find a location profile, i.e., there exists $x^*$ such that $ll(x^*)=rl(x^*)$ when there are $k\geq t+2$ agents in the second case.	
	
	\begin{proof}
	
	When the first $k-2$ agents located at $x_1$, suppose the support utility of the $(k-1)$th agent is maximized at $x_2$.
	In fact, we can prove $x_2\neq x_1$. W.l.o.g, we assume $x_2>x_1$, then $rl(x_1)\geq ll(x_1)$.
	Moreover $$rl(0.75)=\int^{1}_{0.5}\frac{f(y)}{k-1}\,\mathrm{d}y\leq \int^{x_1+0.25}_{x_1-0.25}\frac{f(y)}{k-1}\,\mathrm{d}y\leq ll(0.75)$$
	Since $ll(x)$ is continuous and weakly increasing while $rl(x)$ is continuous and weakly decreasing, 
	$x^*$ exists.
	\end{proof}
	For smaller width, the attraction interval may not overlap.
	This results new possibilities of the interval intersection relationship,
	and many possibilities about who is the winner. 
	Hence the proof does not hold for smaller width.

	\begin{lemma}
	If there always exists a Nash equilibrium when $w=0.5$, then it also holds for $w\geq 0.5$.
	\end{lemma}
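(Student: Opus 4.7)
The plan is to reduce the $w \geq 1/2$ case to the $w = 1/2$ case by gluing the two complementary side intervals and rescaling. First I would observe that when $w \geq 1/2$, the central sub-interval $[1-w, w]$ lies inside \emph{every} agent's attraction interval, since $x_i - w/2 \leq 1-w$ and $x_i + w/2 \geq w$ hold for every $x_i \in [w/2, 1-w/2]$. Consequently the congestion on this common zone equals $n$ under every location profile, and every agent collects a strategy-independent contribution of $\frac{1}{n}\int_{1-w}^{w} f(y)\,\mathrm{d}y$ from it.

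Next I would construct an auxiliary game $G'$ of width $w'=1/2$ on $[0,1]$ as follows. Set $\lambda = 1/(2(1-w))$ and define $\pi : [0, 1-w] \cup [w, 1] \to [0, 1]$ by $\pi(y) = \lambda y$ for $y \in [0, 1-w]$ and $\pi(y) = \lambda(y - (2w-1))$ for $y \in [w, 1]$. Put $f'(z) = 2(1-w)\, f(\pi^{-1}(z))$ on $[0,1]$, and map each location $x_i \in [w/2, 1-w/2]$ to $x'_i = \lambda(x_i - w/2) + 1/4 \in [1/4, 3/4]$. A direct check shows that $[x'_i - 1/4, x'_i + 1/4]$ is exactly the $\pi$-image of the non-common portion $[x_i - w/2, 1-w] \cup [w, x_i + w/2]$ of agent $i$'s attraction interval in $G$, and hence the congestion function of $G'$ at $z$ equals the congestion function of $G$ at $\pi^{-1}(z)$.

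Because $f'$ is rescaled precisely to preserve integrals, it follows that for every profile $\vec{x}$ and its image $\vec{x}'$,
\[
s_i(\vec{x}) \;=\; \frac{1}{n}\int_{1-w}^{w} f(y)\,\mathrm{d}y \;+\; s'_i(\vec{x}'),
\]
where the first term is a constant, independent of both $i$ and $\vec{x}$. Therefore the ranking of agents' supports, the winner set, and each agent's winner utility coincide in $G$ and $G'$ under this correspondence. Since the map $x_i \mapsto x'_i$ is a bijection of the two location spaces, every unilateral deviation in one game corresponds to a unilateral deviation in the other with the same utility difference. Hence $\vec{x}$ is a pure Nash equilibrium of $G$ if and only if $\vec{x}'$ is a pure Nash equilibrium of $G'$, and the hypothesis applied to $G'$ yields a pure Nash equilibrium of $G$.

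The main technical point is setting up the gluing map cleanly and verifying that the bijection between location spaces preserves the congestion structure and integral mass on the non-common region; the only places to be careful are the boundary points of the gluing, which form a measure-zero set and therefore do not affect any support integral. Once that bookkeeping is done, the reduction is essentially automatic because the common-zone contribution is a fixed constant shared equally by every agent: it shifts all supports by the same amount and leaves the $\argmax$ structure, and thus the winner set, unchanged.
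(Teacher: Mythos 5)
Your proposal is correct and follows essentially the same route as the paper: both remove the central interval $[1-w,w]$ (which is covered by every agent and contributes a profile-independent constant), glue and rescale the two side intervals into a width-$1/2$ instance, and pull back a Nash equilibrium via the affine location map $x_i=(2-2w)x_i'+w-1/2$. Your write-up actually supplies the verification the paper leaves implicit (the congestion and integral-preservation checks), and the only cosmetic difference is the normalization constant on the transformed density, which is immaterial to the equilibrium structure.
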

	In this case, the interval $(1-w,w)$ belongs to every agent's support.
	The idea is we can remove this interval and the problem becomes proving the existence of Nash equilibrium with $w=0.5$.
	\begin{proof}
	When $w>0.5$, we can construct a Nash equilibrium from an instance with $w=0.5$. We let the new distribution function be	
	\begin{gather*}
g(x)=\begin{cases}
\displaystyle \frac{f((2-2w)x)}{\int^{1-w}_0f(y)\,\mathrm{d}y+\int^1_wf(y)\,\mathrm{d}y} & \displaystyle x\leq 0.5 \\
\qquad\\
\displaystyle\frac{f((2-2w)x+2w-1)}{\int^{1-w}_0f(y)\,\mathrm{d}y+\int^1_wf(y)\,\mathrm{d}y} & \displaystyle x> 0.5
\end{cases}
\end{gather*}
%	\begin{displaymath}
%g(x)=\left\{\begin{array}{ll}
%\frac{f((2-2w)x)}{\int^{1-w}_0f(y)\,\mathrm{d}y+\int^1_wf(y)\,\mathrm{d}y} &  x\leq 0.5 \\
%\frac{f((2-2w)x+2w-1)}{\int^{1-w}_0f(y)\,\mathrm{d}y+\int^1_wf(y)\,\mathrm{d}y} &  x> 0.5
%\end{array}\right.
%\end{displaymath}
	Suppose there is a Nash equilibrium $(x_1,x_2,...,x_n)$ under distribution $g$ with width $0.5$. We can verify that
	$((2-2w)x_1+w-0.5, (2-2w)x_2+w-0.5,...(2-2w)x_n+w-0.5)$ forms a Nash equilibrium in the distribution $f$ and with width
	$w$.
	\end{proof}
	
\section{Fairness in the Nash equilibrium}
\label{sect:fair}
\begin{definition}[Fairness]
	Given a game $G$, define the fairness of the game to be:
%	\begin{gather*}
	$$FAIR=\min_{\vec{x}\in NE(G)}\frac{\min_{i}u_i(\vec{x})}{\max_{i}u_i(\vec{x})}$$
	%\end{gather*}
\end{definition}

Intuitively, given a location profile $\vec{x}$, the ratio $\frac{\min_{i}u_i(\vec{x})}{\max_{i}u_i(\vec{x})}$ describes how fairly the utilities are divided among all agents. We choose the lowest such ratio of Nash equilibria as our fairness criterion.

%On one side, the support utility of some losing agent could be very small, in the winner utility maximization setting. This is because if this agent cannot win given the other agents' location, then this agent has no incentive to increase the support utility and may have a very low support utility.
%On the other side, losers have little power in political environment.
%Thus, to make more sense, we only consider support utility maximization setting.
In the winner utility setting, The fairness is simply 0 if there exists a losing agent, or 1 otherwise. In the support utility setting, the fairness is generally not easy to compute. However, we give a tight lower bound in such a setting.

%
%Here is an example: there are $n$ agents with the same width $\frac13$ and the distribution $f$ is,
%\begin{displaymath}
%f(x)=\left\{\begin{array}{ll}
%1.5-\epsilon & 0\leq x\leq 1/3 \\
%2\epsilon & 1/3<x\leq 2/3 \\
%1.5-\epsilon & 2/3< x\leq 1
%\end{array}\right.
%\end{displaymath} 
%There exists a Nash equilibrium that $x_1=1/6$,$x_2=5/6$ and $x_i=1/2,\forall i=3,...,n$.
%In this equilibrium, agents 1 and 2 are the winners with support utility $0.5-\epsilon/3$.
%Agents $i\in\{3,...,n\}$ are the losers with support utility $\frac{2\epsilon}{3(n-2)}$. 
%It can be arbitrarily low compared to the support utility of the winner.

%In the support utility setting, each agent seeks the largest support utility.
%We can prove the ratio between the lowest support utility and the largest support utility cannot be arbitrarily small.
We first give a lemma that bounds the ratio $\frac{\min_{i}u_i(\vec{x})}{\max_{i}u_i(\vec{x})}$ for any Nash equilibrium.
\begin{lemma}\label{lem:fair}
The utility of agent $i$ is at least $\frac{1}{2 \lceil w_j/w_i \rceil }$ fraction of the utility of agent $j$. The bound is tight. 
\end{lemma}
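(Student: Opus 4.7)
The plan is to exploit the Nash equilibrium condition by constructing an explicit deviation for agent $i$ that raids a large enough piece of agent $j$'s attraction interval. Fix a Nash equilibrium $\vec{x}$, write $c(x)=c(x,\vec{x})$, and set $m=\lceil w_j/w_i\rceil$. I would first cover the interval $R_j=[x_j-w_j/2,x_j+w_j/2]$ of length $w_j$ by $m$ (possibly overlapping) sub-intervals $I_1,\dots,I_m$, each of length exactly $w_i$ and each contained in $[0,1]$; this is possible because $|R_j|\le m w_i$ and $R_j\subseteq[0,1]$. Since $u_j=\int_{R_j}\frac{f(x)}{c(x)}\,dx$ and the $I_k$'s cover $R_j$, a pigeonhole argument produces some $I_{k^\star}$ with $\int_{I_{k^\star}}\frac{f(x)}{c(x)}\,dx\ge \frac{u_j}{m}$.

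Next I would consider the deviation in which agent $i$ moves so that its new attraction interval $R_i'$ coincides with $I_{k^\star}$ (a valid action since $I_{k^\star}\subseteq[0,1]$ has the required width). The heart of the argument is the pointwise inequality: for every $x\in I_{k^\star}\subseteq R_j$ one has $c(x)\ge 1$ (agent $j$ alone already covers $x$) and $c(x,\vec{x}_{-i})\le c(x)$ (deleting agent $i$ cannot increase congestion), hence
\[
c(x,\vec{x}_{-i})+1 \;\le\; c(x)+1 \;\le\; 2c(x).
\]
This yields $\dfrac{f(x)}{c(x,\vec{x}_{-i})+1}\ge \dfrac{1}{2}\cdot\dfrac{f(x)}{c(x)}$ on $I_{k^\star}$, so
\[
u_i(x_i',\vec{x}_{-i})\;=\;\int_{R_i'}\frac{f(x)}{c(x,\vec{x}_{-i})+1}\,dx\;\ge\;\frac12\int_{I_{k^\star}}\frac{f(x)}{c(x)}\,dx\;\ge\;\frac{u_j}{2m}.
\]
The Nash property $u_i\ge u_i(x_i',\vec{x}_{-i})$ then gives the desired bound $u_i\ge \frac{u_j}{2\lceil w_j/w_i\rceil}$.

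For tightness I would exhibit an instance that simultaneously saturates the pigeonhole step and the $c+1\le 2c$ step. The pigeonhole step is tight exactly when agent $j$'s contribution is spread evenly across its support, which happens for a uniform density with $w_j$ an integer multiple of $w_i$; the factor $\tfrac12$ is tight exactly when $c(x)=1$ on $R_j$, i.e.\ only agent $j$ covers its own interval. A concrete family of examples uses uniform $f$, one wide agent $j$, and a collection of narrow agents whose equilibrium positions force agent $i$ to sit inside $R_j$ and share density evenly with $j$; choosing parameters so that $w_j=n w_i$ and letting the instance scale drives $u_i/u_j\to \frac{1}{2n}=\frac{1}{2\lceil w_j/w_i\rceil}$.

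I expect the lower-bound direction to be essentially mechanical once the cover-and-deviate idea is in place; the main subtlety is purely bookkeeping — verifying that the chosen deviation is admissible (its center lies in $[w_i/2,1-w_i/2]$) and handling the case where the original $R_i$ partially overlaps $I_{k^\star}$, which is precisely where the bound $c(x,\vec{x}_{-i})\le c(x)$ (rather than equality) is needed. The more delicate task is the tightness construction, since one must arrange that the narrow agent is in equilibrium at a position where both inequalities in the chain are simultaneously almost saturated.
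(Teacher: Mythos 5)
Your lower-bound argument is correct and is essentially the paper's own proof: both cover $R_j$ by $\lceil w_j/w_i\rceil$ intervals of width $w_i$, invoke the equilibrium condition against the deviation in which agent $i$ occupies such an interval, and use $c(x,\vec{x}_{-i})+1\le 2c(x)$ for $x\in R_j$. The only cosmetic difference is that the paper sums the inequality over all $\lceil w_j/w_i\rceil$ pieces, while you select the heaviest piece by pigeonhole; the computation is the same.

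The genuine gap is in the tightness part. The construction you sketch --- uniform $f$, with agent $i$ sitting \emph{inside} $R_j$ and sharing density evenly with $j$ --- cannot reach the bound. If $R_i\subseteq R_j$, only $i$ and $j$ cover $R_i$, $w_j=nw_i$, and the density on $R_j$ is a constant $h$, then $u_i=\tfrac{hw_i}{2}$ while $u_j=hw_j-\tfrac{hw_i}{2}$, so $u_i/u_j=\tfrac{1}{2n-1}$, not $\tfrac{1}{2n}$; more generally, the equilibrium condition forces every width-$w_i$ sub-piece of $R_j$ not covered by a third agent to carry mass at most $\int_{R_i}f$, which gives $u_j\le (n-\tfrac12)\int_{R_i}f=(2n-1)u_i$ and bounds the ratio away from $\tfrac{1}{2n}$ within this family. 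The factor of $2$ in the proof comes from the congestion jumping from $c$ to $c+1$ when agent $i$ invades $R_j$, so to saturate it agent $i$ must sit \emph{outside} $R_j$ at congestion $1$, on ground whose density is exactly half that of $R_j$. That is what the paper's example does: $f=2/3$ on $[0,1/2]$ and $4/3$ on $(1/2,1]$ with two disjoint agents of width $1/2$, giving ratio exactly $\tfrac12$ (the paper itself only exhibits the case $\lceil w_j/w_i\rceil=1$); the correct generalization keeps $R_i$ disjoint from $R_j$ on the half-density region with $w_i=w_j/n$, which yields $u_i/u_j=\tfrac{1}{2n}$ exactly.
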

\begin{proof}
%Assume $c(x)$ is the congestion function without agent $i$.
In support utility maximization setting, we have
$$u_i\geq \int_s^{s+w_i}\frac{f(x)}{c(x,\vec{x}_{-i})+1}\,\mathrm{d}x\geq \frac12\int_s^{s+w_i}\frac{f(x)}{c(x,\vec{x}_{-i})}\,\mathrm{d}x,\forall s.$$ 
The idea is to split the interval $(x_j-w_j/2,x_j+w_j/2)$ into many small intervals with size $w_j$,
and apply the inequality on them:
%We let $s$ be different numbers and sum the equations up and get
\begin{align*}
\left\lceil \frac{w_j}{w_i}\right\rceil u_i &\geq \frac{1}{2} 
\sum_{k=1}^{\left\lceil \frac{w_j}{w_i}\right\rceil}
\int_{x_i-\frac{w_i}{2}+w_j\cdot(k-1)}^{x_i-\frac{w_i}{2}+w_j\cdot(k-1)}\frac{f(x)}{c(x,\vec{x}_{-i})}\,\mathrm{d}x\\
%+\int_{x_i-\frac{w_j}{2}+w_i}^{x_i-\frac{w_j}{2}+2w_i}\frac{f(x)}{c(x,\vec{x}_{-i})}\,\mathrm{d}x \\
%&&+...+\int_{x_i-\frac{w_j}{2}+(\left\lceil \frac{w_j}{w_i}\right\rceil-1)*w_i}^{x_i-\frac{w_j}{2}+\left\lceil \frac{w_j}{w_i}\right\rceil*w_i}\frac{f(x)}{c(x,\vec{x}_{-i})}\,\mathrm{d}x]\\
&\geq\frac12\int_{x_i-\frac{w_j}{2}}^{x_i-\frac{w_j}{2}+\left\lceil \frac{w_j}{w_i}\right\rceil\cdot w_i}\frac{f(x)}{c(x,\vec{x}_{-i})}\,\mathrm{d}x\\
&\geq \frac{u_j}2
\end{align*}
Consider the case distribution 
\begin{displaymath}
f(x)=\left\{\begin{array}{ll}
2/3 & x\leq 1/2 \\
4/3 & 1/2< x
\end{array}\right.
\end{displaymath} 
There are two agents with the same width $1/2$, $(x_1=0.25, x_2=0.75)$ is a Nash equilibrium.
We can see the ratio of support utility between two agent meets the bound $1/2$.
\end{proof}

Suppose agent 1 has the largest support utility, agent $n$ has the smallest support utility, 
we can easily get the ratio between the largest and smallest support utility is $\frac{1}{2 \lceil w_1/w_n \rceil }$.
%So when we consider support utility maximization setting, the support utility in Nash equilibrium is more evenly distributed among the agents.

The following theorem is immediate based on Lemma~\ref{lem:fair}.
\begin{theorem}
	The fairness in the support utility setting is at least $\frac{1}{2 \lceil w_M/w_m \rceil }$, where $w_M=\max\{w\}$ and $w_m=\min\{w\}$. The bound is tight.
\end{theorem}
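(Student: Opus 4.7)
The plan is to derive the lower bound as an immediate corollary of Lemma~\ref{lem:fair} and then separately exhibit a matching instance for tightness. For the lower bound, fix any Nash equilibrium $\vec{x}\in NE(G)$ and let $i^*\in\argmax_i u_i(\vec{x})$ and $j^*\in\argmin_i u_i(\vec{x})$ (swapping the roles in the lemma's notation). Applying Lemma~\ref{lem:fair} to the pair $(j^*,i^*)$ yields $u_{j^*}(\vec{x})\geq \frac{1}{2\lceil w_{i^*}/w_{j^*}\rceil}\,u_{i^*}(\vec{x})$. Since $w_{i^*}\leq w_M$ and $w_{j^*}\geq w_m$, we have $w_{i^*}/w_{j^*}\leq w_M/w_m$, and by monotonicity of $\lceil\cdot\rceil$ this gives $\lceil w_{i^*}/w_{j^*}\rceil\leq \lceil w_M/w_m\rceil$. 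Taking the minimum over $\vec{x}\in NE(G)$ then yields $FAIR\geq \frac{1}{2\lceil w_M/w_m\rceil}$.

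For tightness, the plan is to generalize the two-agent example constructed at the end of the proof of Lemma~\ref{lem:fair} (which handles the base case $\lceil w_M/w_m\rceil=1$ by showing the ratio $1/2$ is attained). For integer $k=\lceil w_M/w_m\rceil\geq 2$, I would take two agents with widths $w_m$ and $w_M=k\,w_m$ and design a piecewise-constant density that places most of the mass in a region exclusive to the wide agent while forcing the narrow agent into an overlap region, so that in equilibrium the narrow agent splits a small pocket of density with the wide agent and the wide agent additionally collects a substantial exclusive region. The parameters (heights and lengths of the density pieces) would be tuned so that the narrow agent's utility is exactly $\frac{1}{2k}$ of the wide agent's utility, replicating the tight case of the lemma.

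The routine part is the lower bound, which is really just a restatement of the lemma. The main obstacle is the tightness construction: it must simultaneously (i) make the ratio equal to $\frac{1}{2k}$, and (ii) genuinely be a Nash equilibrium, which means verifying that neither agent has a profitable deviation. The narrow agent in particular must be positioned where any shift reduces its overlap-adjusted support; this is guaranteed by making the density outside the wide agent's interval negligible so that the narrow agent has no alternative with higher exclusive mass. Existence of a pure Nash equilibrium is already assured by Theorem~1, so I can instead locate the extremal equilibrium directly and verify the ratio is $\frac{1}{2k}$ at that point, matching the bound.
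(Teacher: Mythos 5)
Your lower-bound argument is exactly the paper's: the theorem is stated there as an immediate consequence of Lemma~\ref{lem:fair}, and your application of the lemma to the pair (minimum-utility agent, maximum-utility agent) followed by the monotonicity of the ceiling is the half-line the paper leaves implicit. In fact your version is slightly more careful, since the paper's preceding remark tacitly identifies the largest-utility agent with the largest-width agent, which need not hold; your use of $w_{i^*}\leq w_M$ and $w_{j^*}\geq w_m$ handles this correctly.

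The tightness half is where your proposal diverges from (and attempts more than) the paper. The paper's only witness is the two-equal-width example inside the proof of Lemma~\ref{lem:fair}, which covers $\lceil w_M/w_m\rceil=1$; no construction for $k\geq 2$ is given. Your sketch for general $k$ is the right idea but is not yet a proof: ``parameters would be tuned'' leaves the actual equilibrium verification open. To close it, take $w_M=k\,w_m$ with $(k+1)w_m<1$, let $f$ equal a constant $h$ on the wide agent's interval $R_2$, place an isolated pocket of mass $h w_m/2$ at distance greater than $w_m$ from $R_2$ (so that no window of length $w_m$ meets both the pocket and $R_2$), and set $f=0$ elsewhere. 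Then the narrow agent gets exactly $h w_m/2$ at the pocket and at most $h w_m/2$ anywhere else (either half of a full window inside $R_2$ or less), so it is best-responding; the wide agent gets $h\,k\,w_m$ and any move that reaches the pocket sacrifices exclusive mass at least $h w_m$ to gain at most $h w_m/4$, so it will not deviate; and the ratio is $(h w_m/2)/(h k w_m)=1/(2k)$ as required. The one inequality that genuinely needs checking is precisely this trade-off for the wide agent, and it is what forces the pocket to be small and separated from $R_2$ by more than $w_m$ --- if the pocket sits within $w_m$ of $R_2$, the narrow agent could cover both the pocket and part of $R_2$ and the profile would not be an equilibrium. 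With that verification added, your argument is complete and actually establishes the tightness claim in more generality than the paper does.
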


\section{Price of anarchy and upper \\bound of uncovered support}
\label{sect:poa}
The price of anarchy is an important metric that measures how efficiency decrease due to agents' selfish behaviors. In particular, we define the price of anarchy as follows:
\begin{definition}[Price of anarchy]
	Given a game $G$, the price of anarchy of the game is
	\begin{gather*}
	PoA=\frac{\min_{\vec{x}\in NE(G)}\sum_{i=1}^{n}u_i(\vec{x})}{\max_{\vec{x}}\sum_{i=1}^{n}u_i(\vec{x})}
	\end{gather*}
\end{definition}

If we consider PoA in the winner utility maximization setting, the sum of the utility is always 1. There is no inefficiency.
If we consider amount of uncovered support in the winner utility maximization setting, the upper bound could reach 1, which has a poor performance.
To make the problem interesting, we mainly consider the support utility maximization setting.

First, consider the price of anarchy. %The approach is similar to the smooth method[Tim Roughgarden]
\begin{theorem}
The price of anarchy of the support utility maximization is at least $\frac{1}{2}$. The bound is tight.
\end{theorem}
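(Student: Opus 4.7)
The plan is to exploit a simple identity and a single deviation. At any profile $\vec{x}$, since each point with $c(x,\vec{x})=k\geq 1$ contributes $f(x)/k$ to each of the $k$ agents covering it, the total utility telescopes to the $f$-mass of the covered region:
\[
\sum_{i=1}^n u_i(\vec{x}) \;=\; \int_0^1 f(x)\,\mathbf{1}\!\left[c(x,\vec{x})\geq 1\right]\mathrm{d}x.
\]
Hence $OPT:=\max_{\vec{x}}\sum_i u_i(\vec{x})$ is realized by a profile $\vec{x}^*$ whose union of attraction intervals $A^*$ has maximum $f$-mass, and a Nash equilibrium profile $\vec{x}^{NE}$ with covered set $A$ has total utility $\int_A f\,\mathrm{d}x$.

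The heart of the proof is to use, for each agent $i$, the deviation to its optimum location $x_i^*$. Since at any point $x\in A^c$ no other agent covers $x$ in the NE profile, we have $c(x,\vec{x}^{NE}_{-i})=0$ on $A^c$, so
\[
u_i^{NE}\;\geq\;u_i(x_i^*,\vec{x}^{NE}_{-i})\;=\;\int_{R_i^*}\frac{f(x)}{c(x,\vec{x}^{NE}_{-i})+1}\,\mathrm{d}x\;\geq\;\int_{R_i^*\cap A^c} f(x)\,\mathrm{d}x.
\]
Summing over $i$ and using $\bigcup_i R_i^*=A^*$,
\[
NE \;=\;\sum_i u_i^{NE}\;\geq\;\sum_i\int_{R_i^*\cap A^c}\!f\,\mathrm{d}x\;\geq\;\int_{A^*\cap A^c}\!f\,\mathrm{d}x\;=\;\int_{A^*\setminus A}\!f\,\mathrm{d}x.
\]
Combining with $OPT=\int_{A^*\cap A}f\,\mathrm{d}x+\int_{A^*\setminus A}f\,\mathrm{d}x\leq\int_A f\,\mathrm{d}x+NE=2\,NE$ yields $PoA\geq\tfrac12$.

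For tightness, I would exhibit a family of instances whose price of anarchy tends to $\tfrac12$. Take $n$ agents all of width $w=1/n$, and let $f$ place mass $\alpha=\tfrac{n}{2n-1}$ uniformly on the ``peak'' interval $[\tfrac12-\tfrac1{2n},\tfrac12+\tfrac1{2n}]$ and the remaining mass $1-\alpha$ uniformly on the complement. The profile with every agent located at $x=\tfrac12$ is a (weak) pure Nash equilibrium: each agent receives $\alpha/n$ from the peak, while a deviation to any point outside the peak yields exactly $(1-\alpha)/(n-1)$, and the choice $\alpha=n/(2n-1)$ equalizes the two. The optimum can cover the entire unit interval (since $n\cdot w=1$), so $OPT=1$ while $NE=\alpha=\tfrac{n}{2n-1}\to\tfrac12$ as $n\to\infty$.

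The main obstacle is less a technical hurdle than a notational one: identifying the right deviation (each agent deviates to its own OPT-location, not some global swap) and justifying the inequality $c(x,\vec{x}^{NE}_{-i})=0$ on $A^c$, which is what converts the deviation utility into an unweighted $f$-integral. Once that is in place, the telescoping identity closes the argument cleanly.
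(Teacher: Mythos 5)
Your proof is correct, and its crux --- lower-bounding each agent's equilibrium utility by the utility it would get from deviating to its own location in the optimal profile --- is exactly the paper's key step; the difference lies in the bookkeeping. The paper superimposes the optimal and equilibrium profiles into a single $2n$-agent configuration and sandwiches that configuration's total utility, whereas you use the telescoping identity $\sum_i u_i(\vec{x})=\int_A f\,\mathrm{d}x$ (with $A$ the covered region) and split $A^*$ into $A^*\cap A$, charged directly to the equilibrium welfare, and $A^*\setminus A$, charged to the deviation bound via $c(x,\vec{x}^{NE}_{-i})=0$ off $A$. Your version is somewhat cleaner: it isolates exactly which mass the deviation argument must recover and avoids the phantom-agent construction, at the cost of being slightly more tailored to this particular welfare identity. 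Your tightness example is the paper's example with the high-density block translated from $[0,1/n]$ to the middle of the interval; the one point worth spelling out is that a deviation whose interval \emph{partially} overlaps the peak is also unprofitable, because the peak density $n^2/(2n-1)$ shared among the $n$ agents covering it equals the off-peak density $n/(2n-1)$, so every width-$1/n$ interval yields the deviator exactly $1/(2n-1)$, matching its current utility.
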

\begin{proof}
Suppose the optimal location profile that maximizes the sum of support utilities
is $\vec{x}^*=(x_1^*,x_2^*,...,x_n^*)$, and the Nash equilibrium location profile is $(x_1, x_2, ..., x_n)$.
The sum of support utilities in $\vec{x}^*$ is upper bounded by adding $n$ agents with location $\vec{x}$,
\begin{align*}
\sum_{i=1}^{n}u_i(\vec{x}^*)\leq&\sum_{k=1}^{2n}u_k(\vec{x}^*,\vec{x})\\
<&\sum_{k=1}^{n}u_k(x_i^*,\vec{x}_{-i})+\sum_{k=1}^{n}u_k(x_i,\vec{x}_{-i})\\
\leq&\sum_{k=1}^{n}u_k(x_i,\vec{x}_{-i})+\sum_{k=1}^{n}u_k(x_i,\vec{x}_{-i})\\
=&2\sum_{k=1}^{n}u_k(\vec{x})
\end{align*}
%\begin{eqnarray*}
%OPT&\leq&\sum_{k=1}^{2n}u_k(\vec{x}^*,\vec{x})\\
%&<&\sum_{k=1}^{n}u_k(x_i^*,\vec{x}_{-i})+\sum_{k=1}^{n}u_k(x_i,\vec{x}_{-i})\\
%&\leq&\sum_{k=1}^{n}u_k(x_i,\vec{x}_{-i})+\sum_{k=1}^{n}u_k(x_i,\vec{x}_{-i})\\
%&=&2\sum_{k=1}^{n}u_k(\vec{x})
%\end{eqnarray*}
Thus, $PoA\geq 1/2$.
\end{proof}
Actually, when $n$ goes to infinity, the poa can be arbitrarily close to $1/2$.
Consider the example, there are $n$ agents with the same width $1/n$.
\begin{displaymath}
f(x)=\left\{\begin{array}{ll}
\frac{n^2}{2n-1}  & x\in[0,1/n] \\
\frac{n}{2n-1} & x\in(1/n,1]
\end{array}\right.
\end{displaymath} 
The optimal location profile is $(\frac{1}{2n},\frac{3}{2n},...,\frac{2n-1}{2n})$, i.e., the union of the support covers $[0,1]$ interval.
The optimal support utility is $1$.
While, consider the Nash equilibrium $(\frac{1}{2n},\frac{1}{2n},...,\frac{1}{2n})$, i.e., all the agents are located at point $\frac{1}{2n}$.
The support utility in this Nash equilibrium is $\frac{n}{2n-1}$. When $n$ goes to infinity, the PoA converges to $1/2$.

Then we can consider how many clients are not served.
%Except poa, we can also consider the upper bound of the voters without the service.
\begin{theorem}
The support of the uncovered clients is at most $\frac{1}{1+\sum_i w_i}$
\end{theorem}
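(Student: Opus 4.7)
The plan is to establish a per-agent lower bound $u_i \geq w_i \cdot m(U)$, where $U := \{x \in [0,1] : c(x,\vec{x}) = 0\}$ is the uncovered region in the Nash equilibrium $\vec{x}$ and $m(U) := \int_U f(x)\,\mathrm{d}x$ is its total $f$-mass. Since $\sum_i u_i = \int_{\{c \geq 1\}} f(x)\,\mathrm{d}x = 1 - m(U)$, summing the per-agent inequality gives $1 - m(U) \geq m(U)\sum_i w_i$, which rearranges to the desired $m(U) \leq \frac{1}{1+\sum_i w_i}$.

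To derive the per-agent bound, I would fix $i$ and cover $[0,1]$ by $k_i := \lceil 1/w_i \rceil$ intervals of length exactly $w_i$, each a feasible attraction interval for agent $i$ (for instance, $I_j := [(j-1)w_i,\, jw_i]$ for $j=1,\dots,k_i-1$ together with $I_{k_i} := [1-w_i,\,1]$). For every such $I_j$, the Nash condition applied to the deviation that places agent $i$'s interval at $I_j$ yields
$$u_i \;\geq\; \int_{I_j}\frac{f(x)}{c(x,\vec{x}_{-i})+1}\,\mathrm{d}x.$$
Summing over $j$ and using the covering property $\sum_j \mathbf{1}[x\in I_j] \geq 1$ on $[0,1]$ produces $k_i u_i \geq \int_0^1 f(x)/(c(x,\vec{x}_{-i})+1)\,\mathrm{d}x$.

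Next I would split this integral over $R_i$ and $\bar R_i := [0,1]\setminus R_i$. On $R_i$, removing $i$ drops congestion by exactly one, so $c(x,\vec{x}_{-i})+1 = c(x,\vec{x})$ and the contribution equals $\int_{R_i} f/c\,\mathrm{d}x = u_i$. On $\bar R_i$ congestion is unchanged; because $R_i \cap U = \emptyset$ we have $U \subseteq \bar R_i$, and $c = 0$ on $U$ forces the integrand to equal $f$ there, contributing at least $\int_U f\,\mathrm{d}x = m(U)$. Combining gives $k_i u_i \geq u_i + m(U)$, hence $(k_i - 1)u_i \geq m(U)$; the elementary estimate $k_i = \lceil 1/w_i \rceil \leq 1/w_i + 1$ then delivers $u_i \geq w_i \cdot m(U)$, as required.

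The main subtlety I anticipate is the congestion bookkeeping in the splitting step: one must confirm that $c(\cdot,\vec{x}_{-i})+1$ collapses cleanly to $c(\cdot,\vec{x})$ on $R_i$ (the removed agent being precisely the one under analysis) and that restricting to $U$ legitimately lower-bounds the $\bar R_i$ integral via $U \subseteq \bar R_i$. The degenerate cases $w_i \geq 1$ (forcing $k_i = 1$, which by the chain of inequalities forces $m(U) = 0$) and $U = \emptyset$ each make the statement trivial and only require a brief remark.
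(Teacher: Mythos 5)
Your proof is correct and follows essentially the same route as the paper: tile $[0,1]$ with $\lceil 1/w_i\rceil$ width-$w_i$ intervals, apply the Nash deviation condition to each tile, subtract agent $i$'s own contribution, and conclude that the uncovered mass satisfies $m(U)\bigl(1+\sum_i w_i\bigr)\leq 1$. The only cosmetic difference is that you establish the bound $u_i \geq w_i\, m(U)$ uniformly for every agent and sum, whereas the paper derives it only for the agent minimizing $u_i/w_i$ and invokes minimality for the rest; your variant is, if anything, slightly cleaner and more explicit about the congestion bookkeeping.
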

\begin{proof}
We let $p$ denote the ``uncovered support'', $q$ denote ``covered support''.
Suppose $\frac{u_1}{w_1}=\min\{\frac{u_i}{w_i}\}$, i.e., agent 1 has the lowest density of the support utility.
Then the sum of all agents' support utility is at least 
$$q\geq \sum_i w_i\cdot \frac{u_1}{w_1}$$
We split the interval $[0,1]$ into pieces with size $w_1$.%%x_i-w_1/2]$ and $[x_i+w_1/2
 If we don't count agent 1, then in each small pieces, the support of the uncovered set is at most $u_1$. Otherwise, agent 1 will deviate to cover this interval. Then the sum of the uncovered support is at most $\lceil \frac{1}{w_1}\rceil \cdot u_1$
Since agent 1 has covered $u_1$, then actually the uncovered support can be limited,
$$p\leq (\left\lceil \frac{1}{w_1}\right\rceil -1)\cdot u_1\leq \frac{u_1}{w_1}$$

At last we have 
\begin{align*}
p =& \frac{p}{p+q} \\
=& \frac{1}{1+q/p}\\
\leq&\frac{1}{1+(\sum_i w_i\cdot \frac{u_1}{w_1})/(\frac{u_1}{w_1})}\\
=&\frac{1}{1+\sum_i w_i}
\end{align*}
\end{proof}

\clearpage
%\balance

\bibliographystyle{plainnat}
\bibliography{sigproc}
\end{document}